\numberwithin{equation}{section}
\theoremstyle{plain}
\newtheorem{thm}{Theorem}[section]
\newtheorem{exm}{Example}[section]
\newtheorem{prop}{Proposition}[section]
\newtheorem{cor}{Corollary}[section]
\newtheorem{rem}{Remark}[section]
\newtheorem{dfn}{Definition}[section]
\def\wc{\overset{d}{=}}
\def\wcL{\overset{{\mathcal{L}}}{=}}
\def\vr{\upsilon}
\def\bE{\mathbb{E}}
\def\bR{\mathbb{R}}
\def\bP{\mathbb{P}}
\def\ub{\mathbf{w}}
\def\tw{\tau_{{}_{{\mathbf{w}}}}}
\def\SR{\mathrm{SR}}
\def\Ri{\mathrm{Ri}}
\newcommand\bld[1]{\boldsymbol{#1}}
\begin{document}
\today

\title{Optimal Betting: Beyond the Long-Term Growth}
\author{Levon Hakobyan}
\thanks{Department of Mathematics, University of Southern California, Los Angeles, CA 90089,
 levon@usc.edu, lototsky@usc.edu  {\tt https://dornsife.usc.edu/sergey-lototsky/}}

 %

\author{Sergey Lototsky}

%

\begin{abstract}
While the Kelly portfolio has many desirable properties, including optimal long-term growth rate, the resulting investment strategy is rather aggressive. In this paper, we suggest a unified approach to the risk assessment of the Kelly criterion in both discrete and continuous time by introducing and analyzing  the asymptotic variance that describes  fluctuations of the portfolio growth, and use the results to  propose two new measures for quantifying  risk.
\end{abstract}

\keywords{
{Logarithmic Utility,}
			{Ergodic Processes, Sharpe Ratio,}
			Weak Convergence}

\subjclass[2020]{60G50 60F05 60F17 60G51 91A60}

\maketitle	

\section{Introduction}
Investing  is a complex process that includes multiple steps. One of the main challenges in this process is capital allocation. In 1956, John Kelly \cite{Kelly56} introduced a simplified version of the capital allocation problem by considering a gambler who is placing sequential bets on a favorably biased coin. The gambler starts with some initial  wealth and knows the odds. The objective is to devise a strategy that outperforms all alternatives according to a specified performance metric. Kelly proposed the metric  based on maximizing the long-term growth rate of wealth;  the corresponding strategy, known as the Kelly criterion/strategy, consists in betting a fixed fraction $f^*$ of the current overall wealth. Ed Thorp \cite{Thorp03,Thorp06,ThorpPortfolio,Thorp08} implemented and tested the strategy in a variety of settings, including stock market.

The Kelly criterion has an alternative interpretation in terms of  classical utility theory, when the objective is to maximize the expected value of a certain function of the terminal wealth.
Under fairly general conditions on the market, such as iid returns in discrete time or lognormal distribution of returns in continuous time, maximizing the  long-term growth rate is equivalent to maximizing the expected value of the logarithm of the terminal wealth.  Logarithmic utility is a member of a larger class of functions called isoelastic, or power, utilities indexed by  risk aversion parameter. In continuous time, Merton's Fund Separation Theorem \cite{Merton-Fund-Separation-69} states that, for every power utility function, the optimal solution is a fixed multiple of the  Kelly strategy $f^*$.  In practical terms, the strategy means that  the investor may over-bet by using a fraction bigger than $f^*$,   or under-bet by using a smaller fraction. According to Bellman and Kalaba \cite{Bellman-Kalaba}, no other reasonable utility function leads to a constant  optimal strategy. Note that, in the long run, over-betting can lead to ruin and under-betting reduces the growth rate.

Kelly strategy has low risk aversion: the fraction $f^*$  is relatively high and fluctuations of the  wealth are large.  One way to reduce these fluctuations is by using fractional Kelly strategy:
betting a   proportion $f<f^*$. There are various approaches to selecting the fraction. MacLean, Ziemba and Blazenko \cite{MacLean-growth-security-92}  investigate the problem  in discrete time. They consider different criteria based on several combinations of growth and security measures, such as  expected log terminal wealth and the  probability that the wealth stays above a specified path, and solve the corresponding constrained optimization problems. Each solution leads to an efficient frontier similar to the one in the Markowitz Modern Portfolio Theory \cite{Markowitz-52}. Some of the strategies on the efficient frontier are dynamic rather then  fixed fractions; cf.  Gottlieb \cite{Gottlieb-85}.  Still,  certain  convex combinations of optimal growth and optimal security  lead to traditional fractional Kelly criterion. MacLean and Ziemba \cite{MacLean-growth-security-99}  carry out the same analysis in continuous time with log-normal distribution of asset prices.

The fractional Kelly criterion has a statistical interpretation. Any practical implementation of the  Kelly strategy requires estimation of the mean and covariance matrix of returns, and  estimation errors can lead to over-betting and subsequent ruin. Rising and Wyner \cite{Rising-Wyner-mean-shrinkage-2012}  address this problem  by using shrinkage estimator of mean return toward the lower risk-free rate. Their results suggest that a fractional Kelly strategy with full information is
equivalent  to a full Kelly strategy based on shrinkage estimators of the market returns. Building on this idea, Han, Yu and Mathew \cite{Han-Yu-Mathew-shrinakge-2019} show that if the log returns are jointly Gaussian, then the plug-in estimator of the Kelly portfolio is biased and
 overestimates the true Kelly portfolio weights. They suggest an unbiased Kelly estimator of the
 portfolio weights and an estimator that minimizes average loss of growth rate, and demonstrate
 that both estimators lead to fractional Kelly strategies.

The Kelly criterion follows from the law of large numbers (LLN) applied to the logarithm of the wealth process.  The fluctuations around the limit are often described by the central limit theorem (CLT); we refer to these fluctuations as the second-order correction.
In this paper, we show that every fractional Kelly strategy can be realized using a simple CLT-based risk measure. More precisely, we introduce the asymptotic variance of the  long-term growth rate and conduct an in-depth analysis of this quantity under different market conditions, both in discrete and continuous time. We show that asymptotic variance is an increasing function of the fraction $f$ and can be used to penalize risky investments. We introduce new asymptotic quantities, Sharpe Ratio and  ridge coefficient, and use them to find optimal trading strategies that take into account the investment risk through the asymptotic variance of the wealth process. The result is a systematic approach to the construction of  fractional Kelly criteria and is a  computationally efficient alternative  to the security measures used in \cite{MacLean-growth-security-92}.

Section \ref{sec:DT} introduces the new measures of risk in the context of  discrete time models. Section \ref{sec:HFC} outlines a systematic approach to high-frequency compounding. Section \ref{sec:CT} is about the continuous-time models, when the return process is a continuous semi-martingale. In such models, many quantities of interest, including the new measures of risk, often admit simple closed-form expressions.

Throughout the paper, $\bE$ denotes the expectation, $\mathrm{Var}$ denotes the variance,
and $\wc$ denoted equality in distribution. Occasionally, we use $\wcL$ to denote equality in law for random processes. The Greek letter $\upsilon$ is used to denote the asymptotic variance; this letter looks slightly different from the Latin $v$.

\section{Discrete Time}
\label{sec:DT}

\subsection{The Standard Model}
In this section, we review the derivation of the Kelly criterion using the Law of Large Numbers and introduce the second-order correction using the Central Limit Theorem. The limiting variance will be the key to quantifying the risk and deriving various versions of the fractional Kelly criterion.

\begin{dfn}
Given a sufficiently rich probability space $(\Omega,\mathcal{F},\bP)$, the {\tt standard model} of wealth growth is  the sequence $W_n^f$ defined by
\begin{equation}
\label{wealth2}
W_n^f = W_0 \prod_{k=1}^n\big(1 + f r_k\big), \ n=1,2,\ldots,
\end{equation}
where
\begin{enumerate}
\item $W_0$ is the initial wealth, and we usually set $W_0=1$;
\item the random variables $r_k,\ k\geq 1,$ representing returns on  each bet, are independent  and have the
 same distribution as a given random variable $r$;
\item the random variable $r$ satisfies
\begin{align}
\label{r1}
&\mathbb{P}(r\geq -1)=1;\\
\label{r2}
&\mathbb{P}(r>0)>0,\ \mathbb{P}(r<0)>0;\\
\label{r3-2}
&\mathbb{E}\big[\ln(1+r)\big]^2<\infty;
\end{align}
\item   the {\tt strategy}, that is, the proportion $f$ of the overall wealth used on each step, is constant and satisfies the {\tt no-short-no-leverage}  (NS-NL) condition, namely,
    $f\in [0,1]$;  the remaining $(1-f)$ fraction of the wealth is kept aside, with no interest earned.
\end{enumerate}

\end{dfn}

Condition \eqref{r1} quantifies the idea that a loss in a bet should not be
more than $100\%$. Condition \eqref{r2} is basic non-degeneracy: both gains and losses are possible.
Condition \eqref{r3-2} is a minimal requirement to define the
 long-term growth rate of the wealth process and to quantify the resulting risk. The NS-NL condition ensures that simple compounding \eqref{wealth2}, without
 margin calls or borrowing costs, is a reasonably realistic model.
 The second part of this section discusses some of the  extensions of the standard model.

The key objects in this section will be the functions
\begin{equation}
\label{F(f)}
g_r(f)=\mathbb{E}\big[\ln(1+fr)\big],\ \vr_r(f)=\mathrm{Var}\big[\ln(1+fr)\big]=\mathbb{E}\left[\ln(1+fr)\right]^2-g_r^2(f).
\end{equation}

By the law of large numbers (LLN) and the central limit theorem (CLT),
\begin{align}
\label{rate}
\lim_{n\to \infty} \frac{\ln W^f_n}{n}=\lim_{n\to \infty}\frac{1}{n}\sum_{k=1}^n\ln(1+fr_k)=g_r(f) \ \text{with probability one},\\
\label{eq:LLN-g2}
\lim_{n\to \infty}\frac{1}{\sqrt{n}}\sum_{k=1}^n\Big(\ln(1+fr_k)-g_r(f) \Big)=\mathcal{N}\big(0,\vr_r(f)\big) \ \text{in distribution},
\end{align}
where $\mathcal{N}(\mu,\sigma^2)$ is a Gaussian random variable with mean $\mu$ and variance $\sigma^2$.
In particular,
\begin{equation}
\label{W-2order}
W^f_n=\exp\Big( n g_r(f)+\sqrt{n}\big(\vr_r(f)\zeta_n+
\epsilon_n\big)\Big),
\end{equation}
where  $\zeta_n,\ n\geq 1, $ are standard Gaussian random variables and
\begin{equation*}
\lim_{n\to \infty}\epsilon_n=0
\end{equation*}
in probability; see   \cite[Theorem 2.2]{L-P} for details.

\begin{prop}
\label{prop:mean-var}
Under assumptions \eqref{r1}--\eqref{r3-2}, the function $f\mapsto g_r(f)$ is concave and the function $f\mapsto \vr_r(f)$ is increasing, $f\in (0,1)$.
\end{prop}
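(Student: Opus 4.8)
The plan is to treat the two claims separately, starting with the concavity of $g_r$, which is the more elementary. For each fixed realization of $r$, the map $f \mapsto \ln(1+fr)$ is concave on $(0,1)$, since its second derivative equals $-r^2/(1+fr)^2 \le 0$; here $1+fr \ge 1-f > 0$ under \eqref{r1} with $f \in (0,1)$, so the logarithm is well defined. Because concavity is preserved under averaging, I would apply the defining inequality pointwise and integrate: for $f_1,f_2 \in (0,1)$ and $\lambda \in (0,1)$,
\[
\ln\big(1+(\lambda f_1 + (1-\lambda)f_2)\,r\big) \ge \lambda \ln(1+f_1 r) + (1-\lambda)\ln(1+f_2 r),
\]
and taking expectations gives concavity of $g_r$. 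The only side condition to verify is integrability of $\ln(1+fr)$, which follows from \eqref{r3-2} together with the uniform bound $|\ln(1+fr)| \le |\ln(1+r)| + |\ln(1-f)|$.

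For the monotonicity of $\vr_r$, the plan is to differentiate in $f$. Writing $X_f = \ln(1+fr)$ and $\dot X_f = \partial_f X_f = r/(1+fr)$, I would establish that
\[
\vr_r'(f) = \frac{d}{df}\Big(\bE[X_f^2] - g_r^2(f)\Big) = 2\,\bE[X_f \dot X_f] - 2\,\bE[X_f]\,\bE[\dot X_f] = 2\,\mathrm{Cov}\big(X_f, \dot X_f\big),
\]
so that everything reduces to showing this covariance is nonnegative. The key observation is that both $r \mapsto \ln(1+fr)$ and $r \mapsto r/(1+fr)$ are \emph{increasing} functions of the single variable $r$ on $[-1,\infty)$, their $r$-derivatives being $f/(1+fr) > 0$ and $1/(1+fr)^2 > 0$ respectively. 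Two monotone functions pointing the same direction in a single random variable are positively correlated — the standard correlation (Chebyshev/FKG) inequality, proved by taking an independent copy $r'$ of $r$, noting $(\phi(r)-\phi(r'))(\psi(r)-\psi(r')) \ge 0$ almost surely, and taking expectations. Applying this with $\phi = X_f$ and $\psi = \dot X_f$ yields $\vr_r'(f) \ge 0$, with strictness coming from the nondegeneracy \eqref{r2}.

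I expect the main technical obstacle to be justifying the differentiation under the expectation sign in the display for $\vr_r'(f)$; this is where assumption \eqref{r3-2} is used essentially. Fixing a compact subinterval $[a,b] \subset (0,1)$, one has the uniform bounds $|\dot X_f| = |r/(1+fr)| \le \max\!\big(1/a,\, 1/(1-b)\big)$ and $|X_f| \le |\ln(1+r)| + |\ln(1-b)|$, the first because $r/(1+fr)$ is bounded by $1/f$ on $\{r \ge 0\}$ and by $1/(1-f)$ on $\{-1 \le r < 0\}$. By the mean value theorem the difference quotients of $X_f^2$ are then dominated by a constant multiple of $|\ln(1+r)| + |\ln(1-b)|$, which is integrable since $[\ln(1+r)]^2 \in L^1$ by \eqref{r3-2}. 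The dominated convergence theorem thus licenses the interchange of derivative and expectation, after which the covariance computation and the correlation inequality complete the argument.
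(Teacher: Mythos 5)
Your proposal is correct, and its core — expressing $\vr_r'(f)$ as twice the covariance of $\ln(1+fr)$ and $r/(1+fr)$, then invoking the iid-copy correlation inequality for two increasing functions of $r$ — is exactly the paper's own argument. The only differences are cosmetic and in your favor: you obtain concavity of $g_r$ by averaging the pointwise concavity inequality rather than computing $g_r''$, and you supply the domination argument justifying differentiation under the expectation sign, a step the paper's proof leaves implicit.
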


\begin{proof}
By direct computation,
\begin{align*}
\frac{d^2g_r}{df^2}(f)&=-\mathbb{E}\left[\frac{r^2}{(1+fr)^2}\right]<0;\\
\frac{1}{2}\frac{d\vr_r}{df}(f)&=\bE\left[ \frac{r\ln(1+fr)}{1+fr}\right]- \bE\big[\ln(1+fr)\big]  \bE\left[\frac{r}{1+fr}\right] >0.
\end{align*}
The second inequality follows after noticing that, for fixed $f\in (0,1)$, the functions $F(x)= \ln(1+fx)$ and $G(x)= x/(1+fx)$ are
monotonically increasing on $(-1/f,+\infty)$; then, taking $\tilde{r}$ an iid copy of $r$, we get
$$
0<\bE\Big[ \big(F(r)-F(\tilde{r})\big)\big(G(r)-G(\tilde{r})\big)\Big]=2\bE\big[F(r)G(r)\big]-2\bE \big[F(r)\big]\bE \big[G(r)\big].
$$
\end{proof}

Note that \eqref{r3-2} implies $g_r(0+)=\vr_r(0+)=0$. Moreover, if $\bE|r|<\infty$, then
\begin{equation}
\label{derivatives1}
g'_r(0+)=\bE[r],
\end{equation}
and if $\bE[r^2]<\infty$, then
\begin{equation}
\label{derivatives2}
g''_r(0+)=-\bE[r^2],\ \vr''_r(0+)=\mathrm{Var}[r].
\end{equation}
Unlike $g_r$, the functions $f\mapsto \vr_r(f)$ and $f\mapsto \sqrt{\vr_r(f)}$ can have inflection points in $(0,1)$, depending on the distribution of $r$ (see \eqref{Ch0} below).

Assumptions \eqref{r1}--\eqref{r3-2} do not guarantee that the wealth process \eqref{wealth2} is growing with $n$. Accordingly,
each of the following additional conditions ensures asymptotic   growth, and can be used as a definition of an {\tt edge} or a {\tt favorable game}:
\begin{align}
\label{Er} \bE [r] &>0,\\
\label{Elnr} g_r(f)=\bE [\ln(1+fr)]&>0 \ \ \text{for some}\  f>0,\\
\label{Eprime} g'_r(0+)=\lim_{f\to 0+} \mathbb{E}\left[\frac{r}{1+fr}\right]&>0.
\end{align}
Because, for $x>-1$,  $x\not=0$,
\begin{equation}
\label{basic}
\frac{x}{1+x}<\ln(1+x)<x,
\end{equation}
we have \eqref{Er} $\Rightarrow$ \eqref{Elnr} $\Rightarrow$ \eqref{Eprime}. To confirm \eqref{basic} for $x>0$, note that
$(1+t)^{-2}<(1+t)^{-1}<1,\ t>0,$ and integrate from $0$ to $x$. Also, if \eqref{r3-2} holds, then  \eqref{Elnr} and  \eqref{Eprime} are equivalent.

If \eqref{Eprime} holds, then concavity of $g_r$ (Proposition \ref{prop:mean-var}) implies existence of a unique $f^*>0$
such that
\begin{equation}
\label{f*}
f^*=\underset{f}{{\mathrm{argmax}}}\, g_r(f),\ g'_r(f^*)=0,
\end{equation}
 ensuring the largest possible asymptotic growth rate of
$W^f_n$; if, in addition, $g_f'(1-)<0$, then also $f^*<1$; see \cite[Proposition 2.5]{L-P} for details. In this setting, the {\tt Kelly criterion/strategy}
 refers to the choice $f=f^*$ in \eqref{wealth2}, and {\tt fractional Kelly criterion/strategy}
 means using $f\in (0,f^*).$

For an alternative interpretation of the Kelly criterion, define
\begin{equation}
\label{TtoGoal}
\tw(f)=\inf\left\{ n>0: W^f_n>\ub\right\},\ \ \ub>1.
\end{equation}
By \eqref{wealth2},
$$
\tw(f)=\inf\left\{ n>0: \sum_{k=1}^n \ln (1+fr_k) >  \ln \ub\right\};
$$
if  $g_r(f)>0$, then many properties of the $\tw(f)$ follow from renewal theory for random walks with positive drift \cite[Chapter 3]{Gut-RW}. In particular,
\begin{align}
\label{TtoG-A}
&\lim_{\ub\to \infty} \frac{\tw(f)}{\ln \ub }= \frac{1}{g_r(f)} \ \text{with probability one: \cite[Theorem 3.4.1]{Gut-RW}};\\
\label{TtoG-CLT}
&\lim_{\ub\to \infty} \sqrt{\frac{g_r(f)}{\ln\ub}}\left(\tw(f)-\frac{\ln\ub}{g_r(f)}\right)
=\mathcal{N}\left(0, \frac{\vr_r(f)}{g_r^2(f)}\right)\
 \text{in distribution: \cite[Theorem 3.5.1]{Gut-RW}}.
\end{align}

As a result, the  strategy $f^*$ maximizing the asymptotic rate of growth \eqref{rate} also minimizes the asymptotic   time \eqref{TtoG-A} to reach a goal;
see \cite[Section 3]{Kelly-Breiman} for more details.
 On the one hand, the conclusion is not
surprising. On the other hand, mathematically speaking, \eqref{rate} is more general than  \eqref{TtoG-A}:
 for example, \eqref{rate} does not require
$g_r(f)>0$.

To summarize, under conditions \eqref{r1}--\eqref{r3-2} and \eqref{Elnr}, there exists a unique $f^*$ such that $g_r'(f^*)=0$ and
\begin{equation}
\label{f-opt}
f^*=\underset{f}{{\mathrm{argmax}}}\, \lim_{n\to \infty}\frac{\ln W^f_n}{n} = \underset{f}{{\mathrm{argmax}}}\,\lim_{\ub\to \infty} \frac{\tw(f)}{\ln\ub}.
\end{equation}
The strategy $f^*$ has other properties. For example, the sequence $W^{f_n}_n/W^{f^*}_n, n\geq 1,$ is a supermartingale
for every {\em non-anticipating}  strategy   $f_n, \ n\geq 1$ \cite[Theorem 1]{Kelly-Finkel}, which somewhat justifies the initial  assumption of constant $f$ in \eqref{wealth2}.
Still, in terms of asymptotic relation \eqref{W-2order}, most of existing results concentrate on the LLN term \eqref{rate},
without taking the CLT correction  \eqref{eq:LLN-g2} into account.

To understand the effects of the central limit theorems \eqref{eq:LLN-g2} and \eqref{TtoG-CLT} on the asymptotic behavior of  the wealth process \eqref{wealth2},
define the {\tt asymptotic Sharpe ratio} of $W_n^f$ by
\begin{equation}
\label{SR-a}
\SR_r(f)=\frac{g_r(f)}{\sqrt{\vr_r(f)}}.
\end{equation}

Both \eqref{eq:LLN-g2} and \eqref{TtoG-CLT} demonstrate how   $\SR_r(f)$ can quantify the risk associate with \eqref{wealth2}.
\begin{enumerate}
\item By \eqref{eq:LLN-g2}, the probability that, for large $n$, $W_n^f$ drops below  a fixed threshold $\bld{w}$ is approximately
$\displaystyle \Phi\left(\frac{\ln \bld{w}-n\,g_r(f)}{\sqrt{n\,\vr_r(f)}}\right)$, with $\Phi$ denoting the standard normal cdf; this can be further approximated by $\Phi\big(-\sqrt{n}\,\SR_r(f)\big)$, which, for $g_r(f)>0$, is a decreasing function of $ \SR_r(f)$.
\item Similarly, \eqref{TtoG-CLT} suggests that large values of $\SR_r(f)$ reduce  fluctuations of  $\tw(f)$.
\end{enumerate}

Here are some  {\em local} properties of the asymptotic Sharpe ratio.
 \begin{thm}
\label{th-Sharpe}
Under conditions \eqref{r1}--\eqref{r3-2} and \eqref{Elnr}, the function  $f\mapsto \SR_r(f)$ is monotonically decreasing near $f^*$. Moreover, if
$\bE[r^2]<\infty$, then
\begin{equation}
\label{SR-r}
\SR_r(0+)=\frac{\bE[r]}{\sqrt{\mathrm{Var}[r]}},
\end{equation}
the Sharpe ratio of $r$.
\end{thm}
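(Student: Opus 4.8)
The plan is to treat the two assertions separately, exploiting the explicit quotient structure $\SR_r=g_r/\sqrt{\vr_r}$ together with the qualitative information about $g_r$ and $\vr_r$ supplied by Proposition \ref{prop:mean-var}.

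For the local monotonicity near $f^*$, I would first record that $g_r$ and $\vr_r$ are continuously differentiable on $(0,1)$, obtained by differentiating the integrals in \eqref{F(f)} under the expectation with \eqref{r3-2} furnishing the integrable bound, and that $\vr_r(f)>0$ on $(0,1)$ since $\vr_r(0+)=0$ and $\vr_r$ is increasing. Differentiating the ratio then gives
\begin{equation*}
\SR_r'(f)=\frac{g_r'(f)\,\vr_r(f)-\tfrac12\,g_r(f)\,\vr_r'(f)}{\vr_r(f)^{3/2}}.
\end{equation*}
Evaluating at $f^*$ and using $g_r'(f^*)=0$ from \eqref{f*} collapses this to $\SR_r'(f^*)=-g_r(f^*)\,\vr_r'(f^*)\big/\big(2\,\vr_r(f^*)^{3/2}\big)$. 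Each factor in the numerator is strictly positive: $g_r(f^*)>0$ because $f^*$ maximizes $g_r$ and \eqref{Elnr} forces $g_r(f^*)>0$, while $\vr_r'(f^*)>0$ by Proposition \ref{prop:mean-var}. Hence $\SR_r'(f^*)<0$, and continuity of $\SR_r'$ yields a neighborhood of $f^*$ on which $\SR_r'<0$, that is, $\SR_r$ is monotonically decreasing there.

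For the boundary value I would divide numerator and denominator by $f$ and write
\begin{equation*}
\SR_r(f)=\frac{g_r(f)/f}{\sqrt{\vr_r(f)/f^2}},
\end{equation*}
so it suffices to identify the two limits as $f\to0+$. Since $g_r(0+)=0$, the numerator tends to $g_r'(0+)=\bE[r]$ by \eqref{derivatives1}. For the denominator I would note that $\vr_r(f)/f^2=\mathrm{Var}\big[\ln(1+fr)/f\big]$ and that $\ln(1+fr)/f\to r$ pointwise, and then argue that both moments of $\ln(1+fr)/f$ converge to those of $r$: the first is exactly $g_r(f)/f\to\bE[r]$, and the second is handled by dominated convergence. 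Here the inequality \eqref{basic} does the work: on $\{r>0\}$ it gives $|\ln(1+fr)/f|\le|r|$, while on $\{-1\le r<0\}$, combined with the NS-NL bound $1+fr\ge1-f$, it gives $|\ln(1+fr)/f|\le|r|/(1-f)\le2|r|$ for $f\le1/2$; thus $(\ln(1+fr)/f)^2\le4r^2$, integrable by $\bE[r^2]<\infty$. Passing to the limit yields $\bE[(\ln(1+fr)/f)^2]\to\bE[r^2]$, hence $\vr_r(f)/f^2\to\mathrm{Var}[r]$ and therefore $\SR_r(0+)=\bE[r]/\sqrt{\mathrm{Var}[r]}$.

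The first assertion is a one-line sign computation once the derivative formula is in hand, so the only genuinely delicate point is the dominated-convergence step that lets the variance pass to the limit in the second assertion; everything else (differentiability of $g_r,\vr_r$ and positivity of $\vr_r$ on $(0,1)$) is routine.
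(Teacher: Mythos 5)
Your proof is correct and follows essentially the same route as the paper: the same quotient-rule sign computation at $f^*$ (using $g_r'(f^*)=0$, $g_r(f^*)>0$ from \eqref{Elnr}, and the monotonicity of $\vr_r$ from Proposition \ref{prop:mean-var}), and the same small-$f$ asymptotics $g_r(f)\sim f\,\bE[r]$, $\vr_r(f)\sim f^2\,\mathrm{Var}[r]$ to obtain \eqref{SR-r}. The only difference is one of detail: your dominated-convergence argument, with the bounds coming from \eqref{basic}, actually proves the variance asymptotic that the paper simply cites through \eqref{derivatives1}--\eqref{derivatives2}.
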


\begin{proof}
By direct computation,
$$
\SR_r'(f)=\frac{d}{df}\big(\SR_r(f)\big)  = \frac{2g'_r(f)\vr(f)-g_r(f)\vr'(f)}{2(\vr(f))^{3/2}},
$$
and, by Proposition \ref{prop:mean-var}, $\SR_r'(f^*)<0$.

If $\bE[r^2]<\infty$, then \eqref{derivatives1} and \eqref{derivatives2} imply
$$
\SR_r(f)\approx \frac{f\, \bE[r]}{f\,\sqrt{\mathrm{Var}[r]}}, \ \ f\to 0+,
$$
which is \eqref{SR-r}.
\end{proof}

An alternative way to use $\vr_r(f)$ for risk control is via the
{\tt asymptotic ridge coefficient} of $W_n^f$:
\begin{equation}
    \Ri_r(f,\gamma)=g_r(f) - \gamma\vr_r(f), \ \  \gamma\geq0,
\end{equation}
where  $\gamma$ can be interpreted as a risk-aversion parameter. The term {\em ridge coefficient}
is motivated by a similar construction in regression analysis \cite[Section 3.4.1]{hastie-09},
when quadratic penalization pushes the optimal solution closer to zero. The next result shows that,
for $\gamma>0$,  maximizing $\Ri_r$ with respect to $f$ leads to a fractional Kelly strategy.

\begin{prop}\label{prop-Ri-loc-max}
    Under conditions \eqref{r1}--\eqref{r3-2} and \eqref{Eprime}, for every
    $\gamma\in(0,+\infty)$, the function $f\mapsto \Ri_r(f,\gamma)$ achieves its global maximum in $[0,1]$ at a point $f^{\Ri}\in [0,f^*)$.
\end{prop}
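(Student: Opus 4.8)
The plan is to exploit the structural information in Proposition \ref{prop:mean-var} rather than to attack a first-order condition head-on, since $\Ri_r(\cdot,\gamma)$ need not be concave: $g_r$ is concave, but $-\gamma\vr_r$ is concave only when $\vr_r$ is convex, which the discussion after Proposition \ref{prop:mean-var} warns may fail. First I would note that $\Ri_r(\cdot,\gamma)$ is continuous on the compact interval $[0,1]$, so a global maximizer $f^{\Ri}$ exists; the entire task is then to show that no maximizer can lie in $[f^*,1]$. Since both $g_r$ and $\vr_r$ are differentiable on $(0,1)$ by Proposition \ref{prop:mean-var}, I would work with
\[
\Ri_r'(f,\gamma)=g_r'(f)-\gamma\,\vr_r'(f).
\]

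The key step is to pin down the sign of $\Ri_r'$ at and to the right of $f^*$. Strict concavity of $g_r$ together with $g_r'(f^*)=0$ from \eqref{f*} gives $g_r'(f)<0$ for $f\in(f^*,1)$, while Proposition \ref{prop:mean-var} gives $\vr_r'(f)>0$ on $(0,1)$; as $\gamma>0$, both terms are negative, so $\Ri_r'(f,\gamma)<0$ for every $f\in(f^*,1)$. By continuity $\Ri_r(\cdot,\gamma)$ is strictly decreasing on $[f^*,1]$, whence $\max_{[f^*,1]}\Ri_r(\cdot,\gamma)=\Ri_r(f^*,\gamma)$.

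It then remains to overshoot this value from the left. Evaluating the derivative at $f^*$ and using $g_r'(f^*)=0$ yields
\[
\Ri_r'(f^*,\gamma)=-\gamma\,\vr_r'(f^*)<0,
\]
the strict inequality coming once more from $\gamma>0$ and $\vr_r'(f^*)>0$. A strictly negative derivative at $f^*$ forces $\Ri_r(f,\gamma)>\Ri_r(f^*,\gamma)$ for all $f$ in a left neighborhood of $f^*$, so the global maximum strictly exceeds $\Ri_r(f^*,\gamma)=\max_{[f^*,1]}\Ri_r(\cdot,\gamma)$ and cannot be attained on $[f^*,1]$. Hence $f^{\Ri}\in[0,f^*)$, with $f^{\Ri}=0$ genuinely possible when $\gamma$ is large enough that the penalty dominates near the origin.

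The computations here are routine, so the main point to get right is conceptual: because $\Ri_r(\cdot,\gamma)$ is not globally concave it may have several local maxima inside $[0,f^*)$ and more than one global maximizer, and so I would state the conclusion as ``every global maximizer lies in $[0,f^*)$'' without claiming uniqueness. I would also dispose of the boundary case $f^*=1$ separately, where $[f^*,1]$ degenerates to a point: there the same computation gives $\Ri_r'(1-,\gamma)=-\gamma\,\vr_r'(1-)<0$, which again rules out the right endpoint and leaves $f^{\Ri}\in[0,1)=[0,f^*)$.
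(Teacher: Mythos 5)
Your proof is correct, and at its core it is the same argument as the paper's: read off the sign of $\Ri_r'(f,\gamma)=g_r'(f)-\gamma\,\vr_r'(f)$ from Proposition \ref{prop:mean-var}. The difference lies in how the maximizer is produced, and your version is the more robust one. The paper asserts, under \eqref{Eprime} alone, both $\frac{d}{df}\Ri_r(0+,\gamma)>0$ and $\frac{d}{df}\Ri_r(f,\gamma)<0$ for $f\geq f^*$, and then invokes the intermediate value theorem, thereby forcing an interior critical point. The claim at $0+$ is unnecessary for the stated conclusion (which allows $f^{\Ri}=0$) and does not hold in the stated generality: \eqref{Eprime} gives no control on $\vr_r'(0+)$, and when $\bE|r|=\infty$ one can have $\vr_r'(0+)=+\infty$, so $g_r'(0+)-\gamma\vr_r'(0+)$ can be negative. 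This is exactly the situation in the paper's own Example \ref{exm-CM}, where $g_r(f)\sim\bar{\vr}_r(f)\sim\sqrt{f}$ as $f\to 0+$ and $f^{\Ri}=0$ for $\gamma\geq 1$. Your route---existence of a global maximizer by continuity on the compact $[0,1]$, strict decrease of $\Ri_r(\cdot,\gamma)$ on $[f^*,1]$, and $\Ri_r'(f^*,\gamma)=-\gamma\vr_r'(f^*)<0$ to push the maximum strictly to the left of $f^*$---uses only facts that are true under the hypotheses, remains valid in the heavy-tailed case, and your remark that maximizers need not be unique (since $\Ri_r(\cdot,\gamma)$ need not be concave) is also on point. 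One small repair: in the boundary case $f^*=1$, the one-sided limit $\vr_r'(1-)$ need not exist; what you actually need is $\liminf_{f\to 1-}\vr_r'(f)>0$, which follows from Fatou's lemma applied to the representation $\vr_r'(f)=\bE\big[\big(F(r)-F(\tilde{r})\big)\big(G(r)-G(\tilde{r})\big)\big]$ in the proof of Proposition \ref{prop:mean-var}: the integrand is nonnegative and its pointwise limit as $f\to 1-$ is positive with positive probability by \eqref{r2}. With that, $\Ri_r'(f,\gamma)<0$ on a left neighborhood of $1$, and the right endpoint is excluded as you intended.
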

\begin{proof}
Fix $\gamma > 0$. The proof of Proposition \ref{prop:mean-var} shows that, under assumption \eqref{Eprime},
$$
\frac{d}{df}\Ri_r(0+,\gamma)>0 \ \ {\text {and} } \ \ \frac{d}{df}\Ri_r(f,\gamma)<0,\ \
 f\geq f^*.
$$
The result now follows from the intermediate value theorem.
\end{proof}

In what follows, we refer to $f^{\Ri}$ as the {\tt optimal ridge strategy.}


Instead of penalizing high values of the asymptotic variance,
one can  impose an upper bound on $\vr_r(f)$: given $v_0\in (0,\vr_r(1))$,
\begin{equation}
\label{eq-constrained-optimization}
     \text{maximize } g_r(f), \text{ {subject to the constraint} } \vr_r(f)=  v_0;
\end{equation}
properties of the functions $g_r$ and $\vr_r$ imply that
 the equality constraint in \eqref{eq-constrained-optimization} is equivalent to $\vr_r(f)\leq   v_0$.
 Of course,  the solution to  \eqref{eq-constrained-optimization} is $f_0\in (0,f^*)$ such that $v_0 = \vr_r(f_0)$, implying that every fractional Kelly strategy can be realized as
a solution of \eqref{eq-constrained-optimization}.

On the other hand, the asymptotic ridge coefficient emerges as the Lagrangian for
\eqref{eq-constrained-optimization}, with $\gamma$ playing the role of the Lagrange multiplier,
leading to the system of equations
\begin{equation*}
    \frac{\partial\, \Ri_r(f,\gamma)}{\partial f}= g_r'(f) - \gamma \vr_r'(f) = 0, \quad \vr_r(f) = v_0 = \vr_r(f_0),
\end{equation*}
with solution
\begin{equation*}
    f=f_0 \quad \text{and} \quad \gamma=\frac{g_r'(f_0)}{\vr_r'(f_0)}.
\end{equation*}


Unlike $g_r$, $\vr_r$ and $\Ri_r$, the {\em global} behavior of function $f\mapsto \SR_r(f)$ is more sensitive to the particular distribution of $r$; the following two examples
illustrate this sensitivity.

\begin{exm}
\label{exm-BM}
{\rm Consider the {\tt simple Bernoulli model}, with
$$
\bP(r=1)=p>1/2,\ \ \bP(r=-1)=1-p.
$$
Then
$$
g_r(f)=p\ln(1+f)+(1-p)\ln(1-f),\ f^*=\underset{f}{{\mathrm{argmax}}}\;g_r(f)=2p-1,
$$
\begin{equation}
\label{var-bern}
\vr_r(f)=p(1-p)\big(\ln(1+f)-\ln(1-f)\big)^2,
\end{equation}
and
\begin{equation}
\label{SR-bern}
\SR_r(f)=\frac{p\ln(1+f)+(1-p)\ln(1-f)}{\sqrt{p(1-p)}\big(\ln(1+f)-\ln(1-f)\big)}.
\end{equation}
In particular,  $\SR_r'(f)<0,\ f\in (0,1)$, whereas the functions $f\mapsto {\vr_r(f)}$
and $f\mapsto \sqrt{\vr_r(f)}$ are increasing and convex on $(0,1)$.
Figure 1(a) presents the  graphs of $f\mapsto g_r(f)$ and $f\mapsto \SR_r(f)$ in the simple Bernoulli model when $p=0.75$.

In this case,
$$
f^{*} =2p-1=0.5,\ \SR_r(f^*)\approx 0.27,\ g_r(f^*)=0.13,
$$
and any reduction of $f$ will increase $\SR_r$ as the optimal strategy according to Sharpe ratio measure is $f=0$. For example,  changing $f$ from  $f^*$ to $f=0.25$  leads to about $60\%$
increase of $\SR_r$ and a $30\%$ decrease of $g_r$: $\SR_r(0.25)\approx 0.43$, $g_r(0.25)\approx 0.10.$

At the same time

\begin{equation*}
    \Ri_r(f,\gamma) = p\ln(1+f)+(1-p)\ln(1-f) - \gamma{p(1-p)}\big(\ln(1+f)-\ln(1-f)\big)^2.
\end{equation*}

Optimizing this function reduces to solving the following equation:

\begin{equation*}
    4\gamma p(1-p)\ln\bigg(\frac{1+f}{1-f}\bigg)+f+1-2p = 0.
\end{equation*}

When \( p = 0.75 \) and \( \gamma = 1 \), we obtain
\begin{equation}
    f^{\Ri} \approx 0.2,
\end{equation}

which is nearly 30\% lower than the fraction suggested by Kelly criterion.

Examining the results, we see:
\begin{equation}
    g_r(f^{\Ri}) \approx 0.08 \quad \text{versus} \quad g_r(f^{*}) \approx 0.13.
\end{equation}

Additionally,
\begin{equation}
    \vr_r(f^{\Ri}) \approx 0.03 \quad \text{compared to} \quad \vr_r(f^{*}) \approx 0.22.
\end{equation}

In other words, when following the optimal ridge strategy,
we sacrifice about $30\%$ of growth in exchange for a nearly $90\%$ reduction in variance.}

This concludes Example \ref{exm-BM}.
\end{exm}

\begin{figure}
\subfigure[Simple Bernoulli model]{
\includegraphics[width=0.46\linewidth]{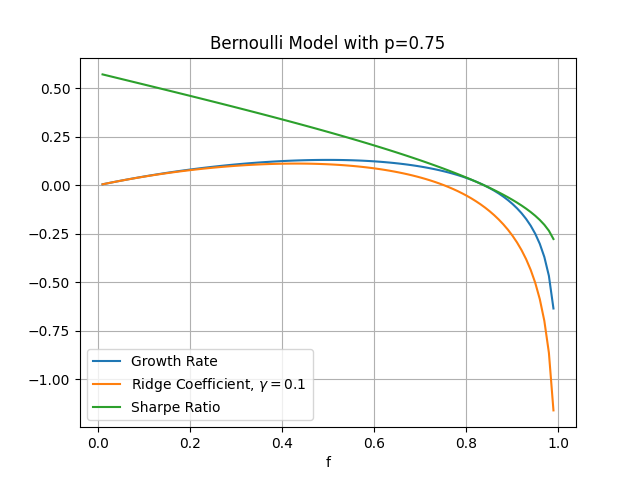}}
\subfigure[Squared Cauchy model]{
\includegraphics[width=0.46
\linewidth]{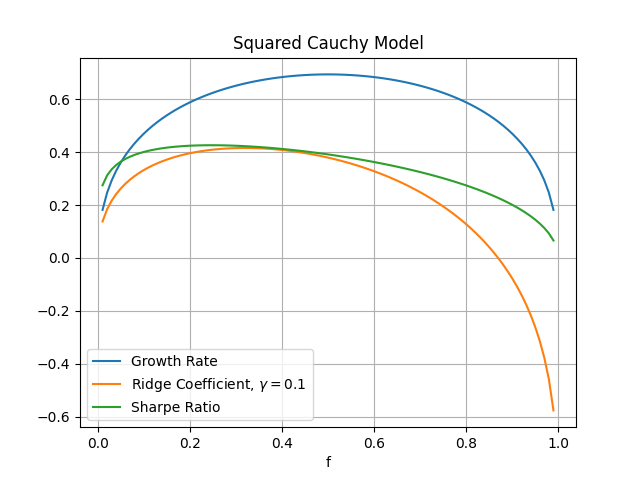}
}
\caption{Asymptotic Growth Rate and Sharpe Ratio For Two Different Models}
\label{fig:res_u_v}
\end{figure}
\FloatBarrier

\begin{exm}
\label{exm-CM}
{\rm Consider
\begin{equation}
\label{Ch0}
r=\eta^2-1,
\end{equation}
where $\eta$ has standard Cauchy distribution with probability density function
$$
h_{\eta}(x)=\frac{1}{\pi(1+x^2)},\  \  \ -\infty< x<+\infty.
$$
Then
$$
g_r(f)=\frac{2}{\pi}\int_0^{+\infty} \frac{\ln\big((1-f)+fx^2\big)}{1+x^2}\, dx= 2\ln\big(\sqrt{f}+\sqrt{1-f}\big),
$$
where the second equality follows from \cite[Formula (4.295.1)]{Gradshtein-Ryzhyk}, so that
$$
f^*=\frac{1}{2},\ g_r(f^*)=\ln 2\approx 0.69.
$$
Now  $\SR_r(f)=g_r(f)\big(\bar{\vr}_r(f)-g^2_r(f)\big)^{-1/2}$, where
$$
\bar{\vr}_r(f)=\frac{2}{\pi}\int_0^{+\infty} \frac{\ln^2\big((1-f)+fx^2\big)}{1+x^2}\, dx,
$$
and the integral can be evaluated numerically.  Note that, in this case, $\bE|r|=+\infty$,
$g_r(f)=g_r(1-f)>0$, $f\in (0,1)$, and, as $f\to 0+$, $g_r(f)\sim \bar{\vr}_r(f)\sim \sqrt{f}$. Indeed, the integral $\int_0^{+\infty} \ln^n(1+x^2) dx/x^2$ converges
for every  $n\geq 1$ so that, substituting $u=f^{1/2} x$,
$$
f^{-1/2}\int_0^{+\infty} \frac{\ln^n\big((1-f)+fx^2\big)}{1+x^2}\, dx \to  \int_0^{+\infty} \frac{\ln^n\big(1+u^2\big)}{u^2}\, du,\ f\to 0+.
$$
In particular, both $\vr_r$ and $\sqrt{\vr_r}$ have an inflection point.

Figure 1(b) presents the graphs of $f\mapsto g_r(f)$ and $f\mapsto \SR_r(f)$ when the distribution of returns is \eqref{Ch0}. We conclude that
\begin{align*}
f^{\circ}&=\underset{f}{{\mathrm{argmax}}}\;\SR_r(f)\approx 0.25, \ \SR_r(f^{\circ})\approx 0.41,\ g_r(f^{\circ})\approx 0.62;\\
f^{*}&=\underset{f}{{\mathrm{argmax}}}\;g_r(f) =0.5,\ \SR_r(f^*)\approx 0.38,\ g_r(f^*)=\ln 2\approx 0.69.
\end{align*}
Similar to the simple Bernoulli model, $f^*=0.5$, but
\begin{itemize}
\item The heavy right tail of $r$ results in a higher growth rate at and around $f^*$, but with smaller Sharpe ratio;
\item There is  a well-defined fractional Kelly criterion, corresponding to maximal asymptotic Sharpe ratio, as opposed to maximal asymptotic growth rate;
\item Maximizing asymptotic Sharpe ratio  reduces the  proportion $f$ from $0.5$ to about $0.25$, but leads to  moderate changes (about 10\%) in both $g_r$ and $\SR_r$.
\end{itemize}

As a consequence of heavy right tail of  \eqref{Ch0} and the resulting large values of $\vr_r$,
we find $f^{\Ri}=0$ for $\gamma\geq 1$. By comparison, if $\gamma=0.2$, then
$f^{\Ri} \approx 0.16$, which is close to $f^{\circ}$.


}
This concludes Example \ref{exm-CM}.
\end{exm}

\begin{exm}
\label{exm-T3}
    {\rm Here is another model where $f^*$ is computable in closed form.
    Consider
    \begin{equation*}
       r =  T_3^2 - 1,
    \end{equation*}
    where $T_3$ has  Student's $t$-distribution with $3$ degrees of freedom.
    Then
    \begin{equation*}
     \begin{split}
    g_r(f) &= \frac{4}{\pi \sqrt{3}} \int_0^{+\infty} \ln \left( (1 - f) + f x^2 \right) \left( 1 + \frac{x^2}{3} \right)^{-2} \, dx \\
    &= 2 \ln \left( \sqrt{1 - f} + \sqrt{3f} - \frac{\sqrt{f}}{\sqrt{\frac{1 - f}{3}} + \sqrt{f}} \right),
    \end{split}
    \end{equation*}
    where the second equality follows from \cite[Formula (4.295.25)]{Gradshtein-Ryzhyk}.
    By using the SymPy Python library, which performs symbolic mathematics, we find that
    $$
    f^* = \frac{7 + \sqrt{21} - \sqrt{6\sqrt{21} - 18}}{16} \approx 0.531,\ g_r(f^*)\approx 0.52.
    $$
    The function $f\mapsto \SR_r(f)$ can be evaluated numerically, and the corresponding graphs
    look similar to Figure 1(b). In particular, the maximal value of $\SR_r(f)$ is approximately $1.2$
    and is achieved at the point $f^{\circ}\approx 0.2$.
    }

    This concludes Example \ref{exm-T3}.
\end{exm}

Here are some general observations related to Examples \ref{exm-CM} and \ref{exm-T3}:
\begin{itemize}
 \item While the Cauchy distribution is  in the Student family
    $T_n$,  corresponding to $n=1$ degree of freedom, it appears that considering $r =  T_n^2 - 1$
    for other values of $n$ does not lead to closed-form expressions for $g_r(f)$ and $f^*$;
    \item If $r=T_n^2-1$, then, for $n\geq 3,$ we have $\bE [r] = 2/(n-2)$, meaning that the edge is decreasing with $n$ and becomes zero in the limit $n\to \infty$.
    \end{itemize}







 \subsection{Beyond the Standard Model}

 Our analysis essentially relies on the existence of the functions $g_r$ and $\vr_r$ from \eqref{F(f)}. Integrability assumptions are necessary for this existence.
 If $\bE\big|\ln(1+fr)\big|=+\infty$, then it is impossible to define $g_r$: by \cite[Theorem 2]{Robbins-InfIID},
 given any  real numbers $a_n,\ n\geq 1,$ and $b$,
 $$
 \bP\left(\lim_{n\to \infty} \frac{1}{a_n}\sum_{k=1}^n \ln (1+fr) = b\right) =0.
 $$
 Similarly, if $g_r(f)$ is defined, but $\bE[\ln(1+fr)]^2=+\infty$, then $\vr_r=+\infty$,
 $\SR_r=0$, the
 fluctuations of $\ln W^f_t$ around $tg_r(f)$ are of order larger than $\sqrt{t}$ and are no longer Gaussian; as a result, the corresponding  analysis falls outside the scope of our discussion.
 On the other hand, some models in continuous time can have $\vr_r=0$;
 see Examples \ref{exm-CIRv} and
 \ref{exm-logistic-price} below.

 We will now relax the independence assumption by allowing $\{r_k,\ k\geq 1\}$ to be a strictly stationary ergodic sequence, with random variable $r$
 representing the invariant distribution. In particular, each $r_k$ has the same distribution as $r$.
Similar to \eqref{rate},  the strong law of large numbers,
 $$
 \lim_{n\to \infty}  \frac{1}{n}\sum_{k=1}^n  \ln(1+fr_k)=\bE[\ln(1+fr)]=g_r(f),
 $$
   follows from a suitable  ergodic theorem \cite[Section 20.2]{Klenke}. The
   corresponding central limit theorem,
    \begin{equation}
 \label{CLT-MC-Br}
 \lim_{n\to \infty} \frac{1}{\sqrt{n}}\sum_{k=1}^n \big(\ln(1+fr_k)-g_r(f)\big)
 \wc \mathcal{N}\big(0,\widetilde{\vr}_r(f)\big),
 \end{equation}
 with
 \begin{equation}
 \label{var-MC-Br}
 \widetilde{\vr}_r(f)=\vr_r(f)+2\sum_{k=2}^{\infty} \mathrm{Cov}\big(\ln(1+fr_k), \ln(1+fr_1)\big),\ \vr_r(f)=\mathrm{Var}[\ln(1+fr)],
 \end{equation}
 requires additional conditions of {\em weak dependence}. Often, existence of higher-order moments of $\ln(1+r)$ is also
 required; cf. \cite[Theorem 7.3.1]{Kurtz-MP}. The following  two classes of models are {\em exponentially mixing} and, in particular, satisfy  \eqref{CLT-MC-Br}.
 \begin{itemize}
 \item The sequence $\{r_k,\ k\geq 1\}$ is  a finite-state Markov chain that is stationary, irreducible, and
 aperiodic \cite[Problem 7.10(a)]{Kurtz-MP}.  For a more detailed discussion, see, for example, \cite[Chapter 16]{ST-MC}.
 \item The sequence $\{r_k,\ k\geq 1\}$ is defined by
 \begin{equation}
 \label{AR-lin}
 r_{k+1}=ar_k+\xi_{k+1}, \ \ |a|<1,
 \end{equation}
 with iid $\xi_k$ having a  pdf  supported on $\bR$ and satisfying $\bE|\xi_k|<\infty$; the corresponding invariant distribution is
 $$
 r=\sum_{k=0}^{\infty}a^k\zeta_k,
 $$
 where the sequence $\{\zeta_k,\ k\geq 0\}$ is an independent copy of  $\{\xi_k,\ k\geq 1\}$.
 This is a particular case of the general model
 \begin{equation}
 \label{NL-AR1}
 r_{k+1}=a(r_k)+b(r_k)\xi_{k+1};
 \end{equation}
 see \cite[Theorem 2.1]{Erg-1995} for details.
 \end{itemize}

 With  \eqref{CLT-MC-Br} in place,  the asymptotic Sharpe ratio $\widetilde{\SR}_r$ becomes
 $$
 \widetilde{\SR}_r(f)=\frac{g_r(f)}{\sqrt{\widetilde{\vr}_r(f)}}.
 $$
 If sufficiently many of the  random variables $\ln(1+fr_k),\  \ln(1+fr_m),\ k\not=m,$  are negatively correlated, then $ \widetilde{\vr}_r(f)<\vr_r(f)$ is a possibility,
  resulting in a larger asymptotic Sharpe ratio compared to the iid case with the same marginal distribution $r$.

 For example, consider a generalization of the simple Bernoulli model, when the return $r_{k+1}$ on step $k+1$ depends on $r_k$ as follows:
 \begin{equation}
 \label{MC-Br}
 \begin{split}
 &\bP(r_{k+1}=1|r_k=1)=p,\ \bP(r_{k+1}=-1|r_k=1)=1-p, \\
 & \bP(r_{k+1}=1|r_k=-1)=1-q,\ \bP(r_{k+1}=-1|r_k=-1)=q,
 \end{split}
 \end{equation}
 for some $p,q\in (0,1)$. The iid case corresponds to $p=1-q$. In other words, the return sequence $r_k,\ k\geq 1$, is a two-state Markov chain
 with transition probability matrix
 $$
P= \left(\begin{array}{cc}
 p & 1-p\\
 1-q & q
 \end{array}\right).
 $$
 The unique invariant distribution is
 $$
 \bP(r=1)=\frac{1-q}{2-p-q},\ \bP(r=-1)=\frac{1-p}{2-p-q},
 $$
 so that
 $$
 f^*=\frac{p-q}{2-p-q}=\bE[r],
 $$
 and, because $p-q<2-p-q$,  the NS-NL condition is the same as the edge condition: $p>q$.
 If $r_1$ has the same distribution as $r$, then the sequence $\{r_k,\ k\geq 1\}$ is strictly stationary, ergodic, and exponentially mixing:
  the $n$-step transition matrix $P^n$ satisfies
 $$
 P^n=\frac{1}{2-p-q}\left(\begin{array}{cc}
  {1-q}  & {1-p} \\
  {1-q}  & {1-p}
 \end{array}\right)+ \frac{\rho^n}{2-p-q}
 \left(\begin{array}{cc}
 {1-p}  & - {1+p} \\
 - {1+q}  & {1-q}
 \end{array}\right)
 $$
with $\rho=p+q-1\in (-1,1).$ In particular, \eqref{CLT-MC-Br} holds, and,  by \eqref{var-bern},
 $$
 \vr_r(f)=\frac{(1-p)(1-q)}{(2-p-q)^2}\,\ln^2\left(\frac{1+f}{1-f}\right).
 $$
 Using the expression for $P^n$, we compute
   \begin{equation}
 \label{var-MC-Br1}
 \widetilde{\vr}_r(f)=\left(1+\frac{2\rho}{1-\rho}\right)\vr_r(f)=\left(1+\frac{2(p+q-1)}{2-p-q}\right)\vr_r(f)=\frac{p+q}{2-p-q}\,\vr_r(f).
 \end{equation}
As a result,  $ \widetilde{\vr}_r<\vr_r$ if, and only if, $p+q<1$, so that, compared with the iid case, one can have the same asymptotic rate of return, for the
same optimal strategy $f^*$,  but with a bigger asymptotic Sharpe ratio
$$
\widetilde{\SR}_r(f^*)=\frac{g_r(f^*)}{\sqrt{\widetilde{\vr}_r(f^*)}}.
$$

As a concrete numerical example, consider $p=17/24$, $q=3/24=1/8$. Then the stationary distribution corresponds to the simple Bernoulli model with $\bP(r=1)=3/4$,
so that $f^*=0.5$ and $g_r(f^*)\approx 0.131$. On the other hand, because $\rho=p+q-1=-1/6<0$, we have
$\widetilde{\vr}_r(f^*)=5\vr_r(f^*)/7$ and $\widetilde{\SR}_r(f^*)=\SR_r(f^*)\sqrt{7/5}\approx 1.183\cdot\SR_r(f^*)\approx 0.320$.
From Figure 1, we see that  the same increase of $\SR_r$ in the iid case is
achieved by decreasing $f$ from $0.50$ to $0.43$, leading to a 25\% decrease of the asymptotic growth rate.

Extending \eqref{TtoG-A} and \eqref{TtoG-CLT} to dependent case requires additional conditions that are not easily verifiable in our setting; cf. \cite{Gut-dt-d}.

Yet another extension of the standard model is multiple bets/multiple investment options: given the iid random
 vectors $\bld{r}_k=(r^{(1)}_k,\ldots, r^{(m)}_k)^{\top}\in \bR^m$ of returns, $r^{(\ell)}_k\geq -1$,  and the corresponding
strategy vector $\bld{f}=(f^{(1)},\ldots, f^{(m)})^{\top}$,
 \begin{equation}
 \label{wealth2-md}
  W_n^{\bld{f}} =  \prod_{k=1}^n\big(1 + \bld{f}\cdot \bld{r}_k\big), \ n=1,2,\ldots.
 \end{equation}
 The NS-NL condition becomes
 \begin{equation}
 \label{md-NLNS}
 f^{(\ell)}\geq 0,\ \sum_{\ell=1}^m f^{(\ell)}\leq 1.
 \end{equation}
 With minor modifications, some of the results for \eqref{wealth2-md} are identical to those for \eqref{wealth2}: see, for example, \cite[Lemmas 1--3]{Kelly-Finkel}.
 In particular, the optimal allocation vector
 $$
\bld{f}^*= \arg\max_{\bld{f}}\lim_{N\to \infty} \frac{\ln W_N^{\bld{f}}}{N}= \arg\max_{\bld{f}} \bE\big[ \ln(1+\bld{f}\cdot\bld{r})\big],
$$
exists; it is unique if the random variables $(r^{(1)}_k,\ldots, r^{(m)}_k)$ are linearly independent.

 The standard model is a particular case of \eqref{wealth2-md}, with $m=2$, $\bld{r}_k=(r_k,0)^{\top}$, $\bld{f}=(f,1-f)^{\top}$.
More generally, if $\bld{r}_k=(r_k,r_0)^{\top}$, with simple Bernoulli $r_k$ and non-random $r_0\in (0,1)$ representing risk-free return, then
 $$
 f^*=p-q\left(1+\frac{2r_0}{1-r_0}\right)<p-q,
 $$
 so that, under the NS-NL condition, if $r_0$ is close enough to $1$, then $f^*=0$: with high risk-free return,   there is no reason to gamble.
 Most of the time, though, the optimal strategy $\bld{f}^*$ does not have a closed-form expression. Beside direct numerical approximation, some information about
  $f^*$ can be obtained from a high-frequency version of the model, which is the subject of the next section.

\section{High-Frequency Compounding}
\label{sec:HFC}

In deterministic setting,   high-frequency compounding leads to exponential growth
$$
\lim_{n\to \infty} \left(1+\frac{r}{n}\right)^{\lfloor nt \rfloor }=e^{rt}.
$$
A slightly more careful analysis shows that the sequence
$$
W^{(n)}_{N}=\left(1+\frac{r}{n}\right)^{nN},\  n,N\geq 1,
$$
represents high-frequency compounding over $N$ unit steps, and
\begin{equation}
\label{Det-CT}
W^{(n)}_{t}=\left(1+\frac{r}{n}\right)^{\lfloor nt \rfloor },\ n\geq 1,\ t>0,
\end{equation}
is a continuous time interpolation of $W^{(n)}_{N}$: $W^{(n)}_{t}=W^{(n)}_{N}$ if $\lfloor nt \rfloor=nN$.

In  stochastic setting, high-frequency compounding can simplify  computation of the optimal allocation.
As an illustration, consider  high-frequency version of  \eqref{wealth2-md} under an additional assumption  that $\bE[|\bld{r}|^2]<\infty$.
If $\bld{\mu}=\bE \bld{r}$ and $Q$ is the covariance matrix of $\bld{r}$,  then
$$
\bld{r}\wc \bld{\mu} + \bld{\xi}
$$
for a suitable random vector $\bld{\xi}$ with mean zero and covariance matrix $Q$.
Now, let
$$
\bld{r}_n = \frac{\bld{\mu}}{n}+\frac{\bld{\xi}}{\sqrt{n}},\ n\geq 1,
$$
let $\bld{r}_{n,i},\ i\geq 1,$ be iid copies of $\bld{r}_n$, and define
\begin{equation}
\label{HF-vec}
W^{n,\bld{f}}_{N} = \prod_{k=1}^N \prod_{\ell=1}^n \Big(1+\bld{f}\cdot\bld{r}_{n,(k-1)n+\ell}\Big).
\end{equation}

\begin{prop}
\label{prop-MdAppr}
Assume that the matrix $Q$ is non-singular,  $\bE|\bld{r}|^3<\infty$, and the vector $Q^{-1}\bld{\mu}$ satisfies the NS-NL condition \eqref{md-NLNS}.
 Then, for every $n\geq 1$,   there exists a unique
\begin{equation}
\label{Md-opt}
\bld{f}^*_n=\arg\max_{\bld{f}}\lim_{N\to \infty} \frac{\ln W^{n,\bld{f}}_{N}}{N}
\end{equation}
and
\begin{equation}
\label{MdAppr}
\bld{f}^*_n= Q^{-1}\bld{\mu}+O(n^{-1/2}),\ \ n\to \infty.
\end{equation}
\end{prop}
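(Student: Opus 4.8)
The plan is to combine a law-of-large-numbers identification of the growth rate with a Taylor expansion of the logarithm and a perturbation (implicit function theorem) argument for the maximizer.

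First I would apply the law of large numbers, exactly as in \eqref{rate}, to the $nN$ independent identically distributed summands $\ln(1+\bld{f}\cdot\bld{r}_{n,i})$ appearing in $\ln W^{n,\bld{f}}_N$, obtaining
\begin{equation*}
G_n(\bld{f}):=\lim_{N\to\infty}\frac{\ln W^{n,\bld{f}}_N}{N}=n\,\bE\big[\ln(1+\bld{f}\cdot\bld{r}_n)\big].
\end{equation*}
Existence and uniqueness of the maximizer $\bld{f}^*_n$ over the compact convex NS-NL region then follow from the multivariate analogue of Proposition \ref{prop:mean-var}: the Hessian of $\bld{f}\mapsto\bE[\ln(1+\bld{f}\cdot\bld{r}_n)]$ equals $-\bE\big[\bld{r}_n\bld{r}_n^{\top}/(1+\bld{f}\cdot\bld{r}_n)^2\big]$, which is negative definite because $\mathrm{Cov}(\bld{r}_n)=Q/n$ is non-singular. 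Strict concavity together with compactness of the feasible set yields a unique $\bld{f}^*_n$ for each fixed $n$.

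Next I would expand. Writing $X_n=\bld{f}\cdot\bld{r}_n=\frac{\bld{f}\cdot\bld{\mu}}{n}+\frac{\bld{f}\cdot\bld{\xi}}{\sqrt n}$ and using $\ln(1+x)=x-\tfrac12 x^2+\tfrac13 x^3+\cdots$ together with $\bE[\bld{f}\cdot\bld{\xi}]=0$ and $\bE[(\bld{f}\cdot\bld{\xi})^2]=\bld{f}^{\top}Q\bld{f}$, the moment computation gives
\begin{equation*}
G_n(\bld{f})=G_\infty(\bld{f})+\frac{1}{3\sqrt n}\,\bE\big[(\bld{f}\cdot\bld{\xi})^3\big]+o(n^{-1/2}),\qquad
G_\infty(\bld{f}):=\bld{f}\cdot\bld{\mu}-\tfrac12\,\bld{f}^{\top}Q\bld{f},
\end{equation*}
where the cubic term is finite precisely because $\bE|\bld{r}|^3<\infty$. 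The limiting objective $G_\infty$ is strictly concave with unconstrained maximizer $Q^{-1}\bld{\mu}$; since this point is assumed to satisfy NS-NL, it is also the constrained maximizer of $G_\infty$. I would then transfer this to the argmax by a perturbation argument: treating $\varepsilon=n^{-1/2}$ as the small parameter, the first-order condition $\nabla_{\bld{f}}G_n(\bld{f})=0$ is a smooth perturbation of $\nabla_{\bld{f}}G_\infty(\bld{f})=\bld{\mu}-Q\bld{f}=0$, whose solution is $Q^{-1}\bld{\mu}$; because the Hessian $-Q$ is invertible, the implicit function theorem produces a branch $\bld{f}^*_n=Q^{-1}\bld{\mu}+O(n^{-1/2})$, which is \eqref{MdAppr}.

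The main obstacle is the analytic control needed to make the expansion and the convergence of gradients and Hessians rigorous under only a third-moment hypothesis. Since $\bld{r}_n$ need not satisfy $1+\bld{f}\cdot\bld{r}_n\ge 0$ almost surely, one must show that the contribution of the event $\{\bld{f}\cdot\bld{r}_n\le -1+\delta\}$, where the logarithm is large and negative, is negligible uniformly for $\bld{f}$ in a neighbourhood of $Q^{-1}\bld{\mu}$; this relies on the concentration of $X_n$ at scale $n^{-1/2}$, and the $o(n^{-1/2})$ remainder is obtained from a dominated-convergence argument with dominating function proportional to $|\bld{f}\cdot\bld{\xi}|^3$. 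Crucially, the same control is required for $\nabla_{\bld{f}}G_n$ (and $\nabla^2_{\bld{f}}G_n$), not merely for the scalar values $G_n(\bld{f})$, so that the implicit function theorem applies. A secondary technical point is the boundary case: if $Q^{-1}\bld{\mu}$ lies on $\partial(\text{NS-NL})$, the plain first-order condition must be replaced by the KKT system before the perturbation argument can be carried out.
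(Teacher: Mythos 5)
Your proposal is correct and follows the same skeleton as the paper's proof: identify the growth rate via the LLN as $n\,\bE\ln(1+\bld{f}\cdot\bld{r}_n)$, Taylor-expand to exhibit the limiting quadratic $\bld{f}\cdot\bld{\mu}-\tfrac12\,\bld{f}\cdot(Q\bld{f})$ whose maximizer is $Q^{-1}\bld{\mu}$, and transfer the maximizer. The differences are in the two end steps, and they are worth recording. For existence and uniqueness the paper simply cites \cite[Lemma 1]{Kelly-Finkel} after noting that non-singularity of $Q$ gives linear independence of the coordinates of $\bld{r}_n$; your direct argument (negative-definite Hessian $-\bE\big[\bld{r}_n\bld{r}_n^{\top}/(1+\bld{f}\cdot\bld{r}_n)^2\big]$ plus compactness of the NS-NL simplex) is self-contained and equivalent. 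For the rate \eqref{MdAppr}, the paper concludes with one line, ``follows after comparing \eqref{HF-a1} and \eqref{HF-a2},'' i.e.\ a comparison of the \emph{values} of the two objectives; taken literally, a uniform $O(n^{-1/2})$ bound on values combined with strong concavity of the limit only yields $\|\bld{f}^*_n-Q^{-1}\bld{\mu}\|=O(n^{-1/4})$. Your insistence on carrying the expansion to the level of gradients (and Hessians) and then invoking the implicit function theorem is precisely what is needed to obtain the stated $O(n^{-1/2})$; equivalently, one can combine the gradient-level bound $\nabla G_n=\bld{\mu}-Q\bld{f}+O(n^{-1/2})$ with strong concavity and the variational inequality characterizing the constrained maximizer, which also disposes of the boundary (KKT) case you flag — a real case here, since \eqref{md-NLNS} is a closed set and the hypothesis allows $Q^{-1}\bld{\mu}\in\partial(\text{NS-NL})$. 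In short, your write-up is the same route as the paper's, executed at the level of detail the claimed rate actually requires.
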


\begin{proof}
We have
$$
\ln W^{n,\bld{f}}_{N}=\sum_{k=1}^N \left(\sum_{\ell=1}^n \ln \left(1+\frac{\bld{f}\cdot\bld{\mu}}{n}+\frac{\bld{f}\cdot\bld{\xi}_{(k-1)n+\ell}}{\sqrt{n}}\right)\right)
$$
and
\begin{equation}
\label{HF-a1}
\lim_{N\to \infty} \frac{\ln W^{n,\bld{f}}_{N}}{N}
= \bE \left(\sum_{\ell=1}^n \ln \left(1+\frac{\bld{f}\cdot\bld{\mu}}{n}
+\frac{\bld{f}\cdot\bld{\xi}_{\ell}}{\sqrt{n}}\right)\right)
=n\bE \ln \left(1+\frac{\bld{f}\cdot\bld{\mu}}{n}+\frac{\bld{f}\cdot\bld{\xi}}{\sqrt{n}}\right).
\end{equation}
If $\bE|\bld{r}|^3<\infty$, then the Taylor formula implies
\begin{equation}
\label{HF-a2}
n\bE \ln \left(1+\frac{\bld{f}\cdot\bld{\mu}}{n}+\frac{\bld{f}\cdot\bld{\xi}}{\sqrt{n}}\right)=
\bld{f}\cdot\bld{\mu}-\frac{1}{2}\,\bld{f}\cdot(Q\bld{f})+O(n^{-1/2}),\ \ n\to \infty.
\end{equation}
If   the matrix $Q$ is non-singular then the random variables $r^{(1)},\ldots, r^{(m)}$ are
linearly independent so that \eqref{Md-opt} follows by \cite[Lemmas 1]{Kelly-Finkel}.
On the other hand,
$$
 \arg\max_{\bld{f}}\Big(\bld{f}\cdot\bld{\mu}-\frac{1}{2}\,\bld{f}\cdot(Q\bld{f})\Big)=  Q^{-1}\bld{\mu}.
 $$
Then \eqref{MdAppr}  follows after comparing \eqref{HF-a1} and \eqref{HF-a2}.
\end{proof}

\begin{rem}
\label{rem-HF-md}
Similar to \eqref{HF-a2},
$$
n\mathrm{Var}\left[\ln \left(1+\frac{\bld{f}\cdot\bld{\mu}}{n}+\frac{\bld{f}\cdot\bld{\xi}}{\sqrt{n}}\right)\right] = \bld{f}\cdot(Q\bld{f})+O(n^{-1/2}),\ \ n\to \infty.
$$
Then, for $W^{n,\bld{f}}_{N}$ defined in \eqref{HF-vec}, we have the following versions of the LLN and the CLT:
\begin{align*}
&\lim_{N\to \infty} \frac{\ln W^{n,\bld{f}}_{N}}{N} = g^n_{\bld{r}}(\bld{f}) \ \ {\text{with probability one}},\\
&\lim_{N\to \infty} \sqrt{N }\left(\frac{\ln W^{n,\bld{f}}_{N}}{N} - g^n_{\bld{r}}(\bld{f})\right) = \mathcal{N}\big(0,\vr^n_{\bld{r}}(\bld{f})\big) \ \
 {\text{in distribution}},\\
 \end{align*}
 and
 \begin{align*}
&g^n_{\bld{r}}(\bld{f}) =\bld{f}\cdot\bld{\mu}-\frac{1}{2}\,\bld{f}\cdot(Q\bld{f})+O(n^{-1/2}), \ n\to \infty,\\
&v^n_{\bld{r}}(\bld{f})=\bld{f}\cdot(Q\bld{f})+O(n^{-1/2}),\ \ n\to \infty.
\end{align*}
\end{rem}

 Proposition \ref{prop-MdAppr} and Remark \ref{rem-HF-md} investigate $W^{n,\bld{f}}_{N}$ in the limit $\lim_{n\to \infty} \lim_{N\to \infty}$.
Accordingly, our next step is  to understand the behavior of $\lim_{n\to \infty} W^{n,\bld{f}}_{N}$ for fixed $N$;  for simplicity, we restrict the discussion to the scalar  $f$.

Consider the following stochastic version of \eqref{Det-CT}:
\begin{equation}
\label{HF-W}
W_t^{n,f} = \prod_{k=1}^{\lfloor nt \rfloor}\big(1+fr_{n,k}\big),
\end{equation}
where
\begin{equation}
\label{HF-1}
r_{n,k}=\frac{\mu}{n}+\frac{\sigma }{\sqrt{n}}\xi_{k}
\end{equation}
and $\xi_{k},\ k\geq 1,$ are iid random variables with zero mean, unit variance, and finite third moment.
Let $B=B_t,\ t\geq 0,$ be a standard Brownian motion on
a stochastic basis $(\Omega, \mathcal{F}, \{\mathcal{F}_t\}_{t\geq 0},\mathbb{P})$
satisfying the usual conditions \cite[Section I.1]{Protter}.
Using the Taylor formula,
$$
\ln W_t^{n,f} = \sum_{k=1}^{\lfloor nt \rfloor}\ln  \big(1+fr_{n,k}\big) = \left(f\mu-\frac{f^2\sigma^2}{2}\right) t +
 \frac{\sigma f}{\sqrt{n}}\sum_{k=1}^{\lfloor nt \rfloor}\xi_{n,k}+O(n^{-1/2}),\ \ n\to \infty;
 $$
 in the limit $n\to \infty$,  the Donsker Invariance Principle \cite[Theorem 21.43]{Klenke} leads to geometric Brownian motion with drift:
\begin{equation}
\label{eq:GBM-D}
\lim_{n\to \infty} W_t^{n,f} \wcL \exp\left(\left(f\mu-\frac{f^2\sigma^2}{2}\right)t+f\sigma B_t\right).
\end{equation}
The convergence in \eqref{eq:GBM-D} is weak in the space $\mathcal{C}(\bR_+)$ of continuous functions on $\bR_+=[0,+\infty)$ and uniform in $f$ over compact subsets of $(0,1)$.
For details, see \cite[Theorem 4.1]{L-P}.

A minor technical complications, that the function $t\mapsto W_t^{n,f}$ might not be continuous, is fixed with an asymptotically negligible correction: the function
$$
t\mapsto \exp\left(\ln W_t^{n,f}+\big(n t-\lfloor n t \rfloor\big)\ln \big(1+ fr_{n,\lfloor n t \rfloor+1}\big)\right),
$$
based on the linear interpolation of $\ln W_t^{n,f}$, is continuous. To avoid unnecessary complications in the formulas, we will not write this correction
while discussing weak convergence in $\mathcal{C}(\bR_+)$.

Because condition $r_{n,k}\geq -1$ puts additional restrictions on the random
variables $\xi_{k}$ in representation \eqref{HF-1}, beyond the
usual conditions of the Donsker invariance principle, it is often more convenient to consider {\tt geometric high-frequency} compounding: the process \eqref{HF-W} with
\begin{equation}
\label{expo-r}
r_{n,k}=\exp\left(\frac{\mu-(\sigma^2/2)}{n}+\frac{\sigma}{\sqrt{n}}\,\xi_{k}\right)-1
\end{equation}
instead of \eqref{HF-1}. The term $-\sigma^2/(2n)$ in \eqref{expo-r}  ensures that, as $n\to \infty$,
$$
 \bE[r_{n,k}]=\frac{\mu}{n} + O(n^{-3/2}),\ \mathrm{Var}[r_{n,k}]=\frac{\sigma^2}{n}+O(n^{-3/2}),
 $$
 that is, \eqref{expo-r} is asymptotically equivalent to \eqref{HF-1}; we refer to this term as
  the {\tt It\^{o} correction}.
If $\xi_k$ are iid, then \eqref{eq:GBM-D} holds \cite[Section 4.1]{L-P}.

We will now drop the independence assumption and consider stationary ergodic sequences. The following definition will simplify the
presentation.

\begin{dfn}
	\label{def-erg}
A sequence  $\xi=\{\xi_{k}, k\in \mathbb{Z}\}$ is called $\varrho$-normal if
\begin{align}
\notag
&\text{the sequence is strictly stationary and ergodic;}\\
\label{mean-var}
&\bE[\xi_{k}]=0,\ \bE|\xi_{k}|^2=1;\\
 \label{cov-conv}
 & 1+2\sum_{k=2}^{\infty} \bE[\xi_{k}\xi_{1}]=\varrho^2\in (0,+\infty);\\
\label{Donsker-D}
&\lim_{n\to \infty} \frac{1}{\sqrt{n}}\sum_{k=1}^{\lfloor nt \rfloor} \xi_{k} \wcL \varrho  B_t\ \ {\rm weakly\ \ in}\ \ \mathcal{C}(\bR_+).
\end{align}
\end{dfn}

By \cite[Theorem VIII.3.97]{LimitTheoremsforStochasticProcesses}, a  strictly stationary and ergodic sequence is $\varrho$-normal if
$$
\sum_{k=1}^{\infty} \left(\bE\big|\bE(\xi_k|\xi_m,\ m\leq 0)\big|^2\right)^{1/2}<\infty.
$$
To state a more sophisticated sufficient condition, define the sigma-algebras
$$
\mathcal{F}^{\xi}_0=\sigma(\xi_k,\ k\geq 0), \ \  \mathcal{F}^{\xi,n}=\sigma(\xi_k,\ k\geq n),
$$
 and the {\em mixing coefficient}
 $$
 \boldsymbol{\varphi}_q(m)=\sup_{A\in \mathcal{F}^{\xi,m}} \Big(\bE \big|\mathbb{P}(A|\mathcal{F}^{\xi}_0)-\mathbb{P}(A)\big|^q\Big)^{1/q},\ q\geq 1.
 $$
 The following result holds.
\begin{prop}
\label{Kurtz}
Assume that $\xi=\{\xi_{k}, k\in \mathbb{Z}\}$ is stationary, with $\bE[\xi_{k}]=0,\ \bE|\xi_{k}|^2=1$ and
\begin{align}
\label{m-del}
&\bE|\xi_{1}|^p<\infty, \ p>2;\\
\label{mixing-p}
&\sum_{m=1}^{\infty} \big(\boldsymbol{\varphi}_q(m)\big)^{(p-2)/(p-1)}<\infty,\ \frac{1}{p}+\frac{1}{q}=1.
\end{align}
Then the sequence $\xi$ is $\varrho$-normal.
\end{prop}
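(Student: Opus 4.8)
The plan is to recognize this as an application of the functional central limit theorem for weakly dependent (mixing) sequences, so the whole proof reduces to verifying that the hypotheses \eqref{m-del} and \eqref{mixing-p} imply the three structural requirements in Definition \ref{def-erg}. The hypotheses already hand us the first two items directly: stationarity is assumed, and the mean-zero unit-variance normalization \eqref{mean-var} is stated verbatim. Moreover, stationarity together with the mixing condition \eqref{mixing-p} forces ergodicity, since $\boldsymbol{\varphi}_q(m)\to 0$ means the tail $\sigma$-algebra is trivial (a mixing stationary sequence is ergodic). So the real content is establishing the covariance convergence \eqref{cov-conv} and the invariance principle \eqref{Donsker-D}.

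The key step is to invoke a $\varphi$-mixing (or $\rho$/$\alpha$-mixing, depending on how $\boldsymbol{\varphi}_q$ is interpreted) functional CLT whose moment-plus-mixing-rate hypothesis is exactly the summability condition \eqref{mixing-p}. The plan is to cite the standard covariance inequality for mixing sequences, which bounds $|\bE[\xi_k \xi_1]|$ by a constant times $\boldsymbol{\varphi}_q(k-1)^{(p-2)/(p-1)}\,\|\xi_1\|_p^2$ (this is the shape of the Davydov/Ibragimov-type covariance bound, with the exponent $(p-2)/(p-1)$ matching the $L^p$ integrability $p>2$). Summability \eqref{mixing-p} then guarantees absolute convergence of $\sum_{k\ge 2}\bE[\xi_k\xi_1]$, so the series defining $\varrho^2$ in \eqref{cov-conv} is finite; one then checks $\varrho^2>0$ as a nondegeneracy requirement (this typically needs a separate remark, since the series could in principle vanish, but under genuine weak dependence it is positive, and the statement presupposes a proper limiting variance). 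With the covariances summable, the finite-dimensional distributions converge to those of $\varrho B_t$ by the mixing CLT for partial sums, and tightness in $\mathcal{C}(\bR_+)$ follows from a moment bound on increments supplied by the same mixing-plus-moment machinery; together these give \eqref{Donsker-D}.

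Concretely, I would structure the argument as: first dispatch \eqref{mean-var} and ergodicity from the assumptions; second, apply the covariance inequality for $\varphi$-mixing sequences to obtain \eqref{cov-conv}; third, quote a functional invariance principle for mixing stationary sequences—such as the one in Ibragimov--Linnik, Peligrad, or the Jacod--Shiryaev reference \cite{LimitTheoremsforStochasticProcesses} already used for the simpler sufficient condition above—whose hypotheses are precisely a $p$-th moment bound and the mixing-rate summability \eqref{mixing-p}, yielding \eqref{Donsker-D} with the same $\varrho$. The cleanest route is to show that \eqref{m-del}--\eqref{mixing-p} imply the hypothesis of the already-cited \cite[Theorem VIII.3.97]{LimitTheoremsforStochasticProcesses}, or a sibling theorem, so that the conclusion ``$\xi$ is $\varrho$-normal'' is immediate.

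The main obstacle is matching the precise form of the mixing coefficient $\boldsymbol{\varphi}_q(m)$ defined in the statement to the coefficient in whatever invariance principle is quoted: the definition here is an $L^q$-averaged version of the conditional-probability deviation $|\mathbb{P}(A\mid\mathcal{F}^\xi_0)-\mathbb{P}(A)|$, which is a $\varphi$-type mixing coefficient in an $L^q$ sense rather than the supremum ($L^\infty$) version. I would need to confirm that the chosen covariance inequality and FCLT are stated for this $L^q$-conditional coefficient (or to derive the required covariance bound from first principles via a conditional-expectation/duality argument, using Hölder with exponents $p$ and $q$ to pair the $L^p$ moment against the $L^q$ mixing bound, which is exactly why the conjugate relation $1/p+1/q=1$ and the exponent $(p-2)/(p-1)$ appear). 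Verifying that the exponent bookkeeping in \eqref{mixing-p} lines up with the covariance inequality is the technical heart of the proof; everything else is a citation or a standard tightness estimate.
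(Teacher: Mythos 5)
Your proposal is correct and is essentially the paper's approach: the paper's entire proof is the citation ``See \cite[Theorem 7.3.1]{Kurtz-MP}. In particular, \eqref{mixing-p} implies ergodicity of $\xi$.'' --- that theorem of Ethier--Kurtz is stated precisely for the $L^q$-averaged mixing coefficient $\boldsymbol{\varphi}_q$ and with hypotheses verbatim equal to \eqref{m-del}--\eqref{mixing-p}, so the coefficient-matching and covariance-summability bookkeeping you identify as the technical heart is already packaged in the cited result. Your side remark that $\varrho^2>0$ requires a nondegeneracy check (since Definition \ref{def-erg} demands $\varrho^2\in(0,+\infty)$) is a fair observation that the paper's citation-proof glosses over.
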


\begin{proof} See \cite[Theorem 7.3.1]{Kurtz-MP}. In particular, \eqref{mixing-p} implies ergodicity of $\xi$.
\end{proof}

\begin{thm}
\label{prop0}
Let $\xi$ be $\varrho$-normal sequence and
\begin{equation}
\label{moment3}
\bE|\xi_{1}|^3<\infty.
\end{equation}
Define $r_{n,k}$ and $W_t^{n,f}$ by \eqref{expo-r} and \eqref{HF-W}, respectively.

Then, for every $f\in (0,1)$,   the sequence of processes
$ \big(W_t^{n,f},\ n\geq 1,\ t\geq 0\big)$ converges weakly in $\mathcal{C}(\bR_+)$ to the process
$$
W^f_t=\exp\left(\left(f\mu-\frac{f^2\sigma^2}{2}\right)t+f\sigma \varrho B_t\right) ,\ t \geq 0,
$$
and the convergence is uniform in $f$ on compact subsets of $(0,1)$.
\end{thm}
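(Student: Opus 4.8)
The plan is to pass to logarithms and split $\ln W_t^{n,f}$ into a deterministic drift, a linear (Donsker) part, a quadratic part, and a remainder, matching each to a limit mechanism already available in the excerpt. Writing $a_n=(\mu-\sigma^2/2)/n$ and $b_n=\sigma/\sqrt n$, the geometric form \eqref{expo-r} gives $1+fr_{n,k}=1-f+fe^{a_n+b_n\xi_k}>0$ for $f\in(0,1)$, so
$$
\ln W_t^{n,f}=\sum_{k=1}^{\lfloor nt\rfloor}\phi_f\big(a_n+b_n\xi_k\big),\qquad \phi_f(x)=\ln\big(1-f+fe^{x}\big).
$$
The function $\phi_f$ is smooth with $\phi_f(0)=0$, $\phi_f'(0)=f$, $\phi_f''(0)=f(1-f)$, and --- this is the structural reason geometric compounding is convenient --- $\phi_f'\in(0,1)$ while $\phi_f''$ and $\phi_f'''$ vanish as $x\to\pm\infty$; hence $\phi_f'''$ is bounded on all of $\bR$, uniformly for $f$ in a compact subset of $(0,1)$. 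This removes the usual worry that a large $\xi_k$ spoils the Taylor remainder.

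Next I would Taylor-expand to second order,
$$
\phi_f\big(a_n+b_n\xi_k\big)=fa_n+fb_n\xi_k+\tfrac12 f(1-f)\big(a_n+b_n\xi_k\big)^2+R_{n,k},
$$
and sum over $k\le\lfloor nt\rfloor$. The deterministic part gives $fa_n\lfloor nt\rfloor\to f(\mu-\sigma^2/2)t$; the linear part $f\sigma\,n^{-1/2}\sum_{k\le\lfloor nt\rfloor}\xi_k$ converges weakly in $\mathcal C(\bR_+)$ to $f\sigma\varrho B_t$ by \eqref{Donsker-D}; and the dominant piece of the square, $\tfrac{f(1-f)\sigma^2}{2}\cdot n^{-1}\sum_{k\le\lfloor nt\rfloor}\xi_k^2$, converges to $\tfrac{f(1-f)\sigma^2}{2}\,t$. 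For this last limit I would use the ergodic theorem to get $n^{-1}\sum_{k\le\lfloor nt\rfloor}\xi_k^2\to t\,\bE[\xi_1^2]=t$ pointwise a.s., and then upgrade to uniform convergence on compact $t$-intervals by noting that a sequence of nondecreasing functions converging pointwise to a continuous limit converges locally uniformly. The cross term and the $a_n^2$ term are $O(n^{-3/2})$ per summand and disappear in the limit.

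For the remainder, the bound on $\phi_f'''$ yields $|R_{n,k}|\le C(f)\big(|a_n|^3+b_n^3|\xi_k|^3\big)$, so that
$$
\sum_{k\le\lfloor nt\rfloor}|R_{n,k}|\le C(f)\Big(\lfloor nt\rfloor|a_n|^3+\sigma^3 n^{-3/2}\sum_{k\le\lfloor nt\rfloor}|\xi_k|^3\Big);
$$
by \eqref{moment3} and the ergodic theorem, $n^{-1}\sum_{k\le\lfloor nt\rfloor}|\xi_k|^3\to t\,\bE|\xi_1|^3<\infty$, so the right-hand side is $O(n^{-1/2})\to0$, again uniformly on compacts. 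Collecting the four contributions and simplifying $f(\mu-\sigma^2/2)+\tfrac12 f(1-f)\sigma^2=f\mu-\tfrac12 f^2\sigma^2$ shows that $\ln W_t^{n,f}$ converges weakly in $\mathcal C(\bR_+)$ to $(f\mu-\tfrac12 f^2\sigma^2)t+f\sigma\varrho B_t$. Since the drift, quadratic, and remainder terms together converge in probability (uniformly on compacts) to a deterministic continuous function while only the linear term is random in the limit, I would combine them by a Slutsky-type argument in $\mathcal C(\bR_+)$ and then apply the continuous map $\exp$, working throughout with the asymptotically negligible continuous interpolation of $\ln W_t^{n,f}$ noted before the theorem; this gives the stated limit $W^f_t$.

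The step I expect to require the most care is the uniformity in $f$ over compact subsets of $(0,1)$, as opposed to fixed-$f$ convergence. Here I would verify that each estimate above is uniform in $f$: the coefficients $f$ and $f(1-f)$ and the limiting drift are polynomial in $f$, and $\sup_{x}|\phi_f'''(x)|$ is bounded uniformly for $f\in[f_1,f_2]\subset(0,1)$ because $\phi_f'''(x)$ is jointly continuous and vanishes as $x\to\pm\infty$ uniformly in such $f$ (using $f\ge f_1$ and $1-f\ge 1-f_2$). The randomness enters only through the $f$-independent process $n^{-1/2}\sum_k\xi_k$, and the map carrying it to the limit, $(\omega,f)\mapsto\exp\big((f\mu-\tfrac12 f^2\sigma^2)t+f\sigma\,\omega\big)$, is jointly continuous and locally uniform in $f$. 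This is precisely the structure used for the iid case in \cite[Theorem 4.1]{L-P}, whose argument I would follow, substituting \eqref{Donsker-D} for the Donsker invariance principle and the ergodic theorem for the law of large numbers.
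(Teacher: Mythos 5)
Your proposal is correct and follows essentially the same route as the paper's proof: the identical second-order Taylor expansion of $x\mapsto\ln(1-f+fe^{x})$ with the key uniform bound on its third derivative over compact $f$-sets, the $\varrho$-normality condition \eqref{Donsker-D} for the linear term, the ergodic theorem for the quadratic term, and an $O(n^{-1/2})$ remainder estimate using $\bE|\xi_1|^3<\infty$. The only cosmetic difference is that you control the remainder pathwise via the ergodic theorem applied to $|\xi_k|^3$, whereas the paper bounds $\bE\sup_{0<t<T,\,f\in A}|U_t^n-V_t^n|$ directly in $L^1$; both yield the same conclusion.
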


\begin{proof} For $f\in (0,1)$, define the function
$$
F(x)=\ln(1-f+fe^x),\ x\in \bR,
$$
so that $F(0)=0,\ F'(0)=f$, $F''(0)=f-f^2$, and, given a compact set $A\subset (0,1)$,
\begin{equation}
\label{F-bd3}
\sup_{x\in \bR, f\in A}|F'''(x)|=C_A
\end{equation}
for some number $C_A$ depending only on the set $A$. Also, let
$$
X_{n,k}=\frac{\mu-(\sigma^2/2)}{n}+\frac{\sigma}{\sqrt{n}}\,\xi_{k},  \
U^n_t= \sum_{k=1}^{\lfloor nt \rfloor} \left(F'(0)X_{n,k}+\frac{1}{2}F''(0)X^2_{n,k}\right),\
V^n_t=\sum_{k=1}^{\lfloor nt \rfloor} F(X_{n,k}).
$$
The statement of the theorem is equivalent to
$$
\lim_{n\to \infty}  V^n_t \wcL \left(f\mu-\frac{f^2\sigma^2}{2}\right)t+f\sigma \varrho B_t.
$$
By assumption,
$$
\lim_{n\to \infty} \frac{1}{n}\sum_{k=1}^{\lfloor nt \rfloor} \xi_{k}=0,\  \lim_{n\to \infty} \frac{1}{n}\sum_{k=1}^{\lfloor nt \rfloor} |\xi_{k}|^2=t,
$$
both with probability one, and therefore
\begin{align*}
\lim_{n\to \infty}\sum_{k=1}^{\lfloor nt \rfloor}X_{n,k}&=\lim_{n\to \infty}\sum_{k=1}^{\lfloor nt \rfloor}\frac{\mu-(\sigma^2/2)}{n}+\lim_{n\to \infty}
\frac{\sigma}{\sqrt{n}}\sum_{k=1}^{\lfloor nt \rfloor}\xi_{k}\wcL\left(\mu-\frac{\sigma^2}{2}\right) t + \sigma\varrho B_t,\\
\lim_{n\to \infty}\sum_{k=1}^{\lfloor nt \rfloor}X^2_{n,k}&=\lim_{n\to \infty}\frac{\sigma^2}{n}\sum_{k=1}^{\lfloor nt \rfloor}\xi_k^2=\sigma^2 t,
\end{align*}
so that
$$
\lim_{n\to \infty}  U^n_t \wcL \left(f\mu-\frac{f^2\sigma^2}{2}\right)t+f\sigma \varrho B_t.
$$
It remains to show that, for every $T>0$ and a compact set $A\subset (0,1)$,
$$
\lim_{n\to \infty} \bE \sup_{0<t<T, f\in A} |U_t^n-V_t^n|=0.
$$
Using \eqref{F-bd3}, the Taylor formula, and $(x+y)^3\leq 4(x^3+y^3), \ x,y>0$ (a version of H\"{o}lder's inequality),
$$
|F(X_{n,k})-F'(0)X_{n,k}-\frac{1}{2}F''(0)X^2_{n,k}|\leq C_A|X_{n,k}|^3\leq 4C_A(\mu^3+\sigma^3)\left(\frac{1}{n^3}+\frac{|\xi_k|^3}{n^{3/2}}\right)
$$
so that
$$
\bE \sup_{0<t<T, f\in A}|U_t^n-V_t^n|\leq  \frac{4TC_A(\mu^3+\sigma^3)}{\sqrt{n}}\left(1+2\bE |\xi_1|^3\right),
$$
concluding the proof of Theorem \ref{prop0}.
\end{proof}

\begin{cor}
In the setting of Theorem \ref{prop0}, let
$$
g_n(f)=\lim_{t\to \infty} \frac{\ln W^{n,f}_t}{t}
$$
and
 \begin{equation}
 \label{fn*}
  f_n^*=\arg\max\limits_{f\in [0,1]}g_n(f).
  \end{equation}
If $0<\mu<\sigma^2$, then
\begin{equation}
\label{fn-conv}
\lim_{n\to \infty} f_n^*=\frac{\mu}{\sigma^2},\ \qquad \ \lim_{n\to \infty} g_n(f_n^*)=\frac{\mu^2}{2\sigma^2}.
\end{equation}
  \end{cor}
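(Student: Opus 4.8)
The plan is to reduce everything to the convergence of the \emph{deterministic} growth-rate functions $g_n$; the weak convergence of Theorem \ref{prop0} enters only through the form of the limit. The key point is that each $g_n$ is an ordinary (non-random) function of $f$, obtained from a law-of-large-numbers limit in $t$, so the task becomes: show $g_n\to g_\infty$ with $g_\infty(f)=f\mu-f^2\sigma^2/2$ strongly enough to pass to the argmax.

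First I would produce a closed form for $g_n$. For fixed $n$ and $f$, the sequence $\ln(1+fr_{n,k})=F(X_{n,k})$ is a fixed measurable function of the single coordinate $\xi_k$, where $F(x)=\ln(1-f+fe^x)$ and $X_{n,k}=(\mu-\sigma^2/2)/n+(\sigma/\sqrt{n})\xi_k$ are exactly the quantities from the proof of Theorem \ref{prop0}; hence it is itself stationary and ergodic. Writing $\ln W_t^{n,f}/t=(\lfloor nt\rfloor/t)(1/\lfloor nt\rfloor)\sum_{k=1}^{\lfloor nt\rfloor}F(X_{n,k})$ and sending $t\to\infty$, the Birkhoff ergodic theorem together with $\lfloor nt\rfloor/t\to n$ yields
$$
g_n(f)=n\,\bE[\ln(1+fr_{n,1})]=n\,\bE[F(X_{n,1})],
$$
the integrability needed for Birkhoff following from $|F(x)|\le|x|+|\ln(1-f)|$ and $\bE|\xi_1|<\infty$. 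I would also record, via Proposition \ref{prop:mean-var} applied to $r=r_{n,1}=e^{X_{n,1}}-1>-1$ (which satisfies \eqref{r1}--\eqref{r3-2}), that $g_n$ is strictly concave on $(0,1)$, so the maximizer $f_n^*$ in \eqref{fn*} is well defined and unique.

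Next I would expand in $n$, reusing $F(0)=0$, $F'(0)=f$, $F''(0)=f-f^2$ and the third-order remainder controlled by \eqref{F-bd3}, together with $\bE[X_{n,1}]=(\mu-\sigma^2/2)/n$, $\bE[X_{n,1}^2]=\sigma^2/n+O(n^{-2})$ and $\bE|X_{n,1}|^3=O(n^{-3/2})$. This gives $g_n(f)=f\mu-f^2\sigma^2/2+O(n^{-1/2})=g_\infty(f)+O(n^{-1/2})$, matching the growth rate of the limiting geometric Brownian motion. The one point requiring care is \emph{uniformity up to the boundary} of $[0,1]$: a direct computation gives
$$
F'''(x)=\frac{f(1-f)e^x\,(1-f-fe^x)}{(1-f+fe^x)^3},
$$
and the substitution $a=1-f$, $b=fe^x$ with $ab\le(a+b)^2/4$ and $|a-b|\le a+b$ shows $|F'''(x)|\le 1/4$ uniformly in $x\in\bR$ and $f\in[0,1]$. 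Consequently the remainder is uniform and $\delta_n:=\sup_{f\in[0,1]}|g_n(f)-g_\infty(f)|\to 0$.

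Finally I would run the standard argmax argument. Under $0<\mu<\sigma^2$ the strictly concave $g_\infty$ has its unique maximizer $f^\circ=\mu/\sigma^2$ in $(0,1)$, with value $g_\infty(f^\circ)=\mu^2/(2\sigma^2)$. The sandwich
$$
g_\infty(f^\circ)-\delta_n\le g_n(f^\circ)\le g_n(f_n^*)\le g_\infty(f_n^*)+\delta_n\le g_\infty(f^\circ)+\delta_n
$$
forces $g_n(f_n^*)\to\mu^2/(2\sigma^2)$ (the second claim) and also $g_\infty(f_n^*)\to\max_{[0,1]}g_\infty$; since $g_\infty$ is continuous with a unique maximizer on the compact $[0,1]$, every subsequential limit of $f_n^*$ equals $f^\circ$, giving $f_n^*\to\mu/\sigma^2$ (the first claim). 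The main obstacle throughout is precisely the boundary uniformity of the previous step: without the global bound $|F'''|\le 1/4$ the estimate \eqref{F-bd3} degenerates as $f\to 0,1$, and one cannot a priori exclude $f_n^*$ from escaping to an endpoint.
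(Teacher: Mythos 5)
Your proof is correct. A point of comparison: the paper states this corollary without giving any proof at all; the closest model it offers is the proof of Proposition \ref{prop-MdAppr} together with Remark \ref{rem-HF-md}, which for the iid, arithmetic-returns case follows exactly your blueprint — identify $g_n(f)$ as $n$ times an expectation via the law of large numbers, Taylor-expand with a third-moment remainder to get $f\mu-f^2\sigma^2/2+O(n^{-1/2})$, then pass to the argmax. Your argument transfers that blueprint to the stationary-ergodic, geometric setting of Theorem \ref{prop0} (Birkhoff's theorem in place of the iid LLN), which is what the corollary actually requires. The one place where you go beyond anything written in the paper is the boundary uniformity, and this is a genuine contribution rather than a pedantic one: the paper's bound \eqref{F-bd3} is stated only for compact $A\subset(0,1)$, so a naive adaptation would not exclude $f_n^*$ escaping toward an endpoint of $[0,1]$. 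Your computation, with $a=1-f\ge 0$ and $b=fe^x\ge 0$,
$$
F'''(x)=\frac{f(1-f)e^x\,\big(1-f-fe^x\big)}{\big(1-f+fe^x\big)^3},
\qquad
|F'''(x)|=\frac{ab\,|a-b|}{(a+b)^3}\le\frac{\tfrac14(a+b)^2\cdot(a+b)}{(a+b)^3}=\frac14,
$$
valid for all $f\in[0,1]$ and $x\in\bR$, closes this gap and makes the $O(n^{-1/2})$ error uniform on the whole closed interval; after that your sandwich argument, using that $\mu/\sigma^2$ is the unique maximizer of the strictly concave limit and lies in the interior because $0<\mu<\sigma^2$, is standard and correctly executed, and it only uses existence (not uniqueness) of $f_n^*$. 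Two minor points you should spell out: ergodicity of the sequence $\{F(X_{n,k})\}_{k\ge 1}$ holds because it is an instantaneous (factor) coding of the ergodic sequence $\xi$, and existence of the maximizer in \eqref{fn*} for each fixed $n$ deserves one line (concavity of $g_n$, or upper semicontinuity on the compact interval $[0,1]$).
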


\begin{rem}
With some extra effort, condition \eqref{moment3}  can be weakened to
\eqref{m-del}.
\end{rem}

\begin{rem}
A strictly stationary sequence $\xi=\xi_k,\ k=1,2,3,\ldots$ can be embedded into a stationary sequence
$\widehat{\xi}=\widehat{\xi}_k,\ k\in \mathbb{Z}$; cf. \cite[Remark 7.3.2(a)]{Kurtz-MP}.
\end{rem}

\section{Continuous Time  Compounding}
\label{sec:CT}

Continuous time   compounding with rate, or return, process $R$ and constant proportion $f$ of the invested amount can be modeled by
\begin{equation}
\label{wealth1-ct}
dW_{t}^f=fW_t^fdR_{t},\ t>0,
\end{equation}
for a suitable semi-martingale $R=R_t$ on a stochastic basis
\begin{equation*}
\mathbb{F}=\Big(\Omega, \mathcal{F},\ \{\mathcal{F}_t\}_{t\geq 0},
\mathbb{P}\Big)
\end{equation*}
satisfying the usual conditions \cite[Section I.1]{Protter}.
With the assumption   $W_0^f=1$,  \eqref{wealth1-ct} is equivalent to the  integral equation
\begin{equation}
\label{wealth1-ct-i}
W_{t}^f=1+f\int_0^tW_s^fdR_{s}
\end{equation}

If the trajectories of $R$ are continuous with quadratic variation $\langle R \rangle$, and \eqref{wealth1-ct-i} is in the It\^{o} sense, then
\begin{equation}
\label{DDE-2-cont}
W^f_t=\exp\left(fR_t-\frac{f^2}{2}\langle R\rangle_t\right);
\end{equation}
see \cite[Theorem II.36]{Protter}.
In particular,
\begin{equation}
\label{DDE-3-cont}
\frac{\ln W^f_t}{t} = \frac{fR_t}{t}-\frac{f^2\langle R\rangle_t}{2t}.
\end{equation}

To proceed, recall \cite[Definition 2.1.1]{LSh-M} that a continuous semi-martingale is a sum of a continuous process $A$ of bounded variation and a continuous local martingale
$M$:
$$
R_t=A_t+M_t,
$$
with $ \langle R\rangle_t=\langle M\rangle_t$. If all the trajectories   $t\mapsto A_t$ and $t\mapsto \langle M\rangle$ are absolutely continuous with respect to the
Lebesgue measure, then
\begin{equation}
\label{R-simple-s}
R_t =  \int_0^t r_s\, ds + \int_0^t \sqrt{v_s}\, d B_s
\end{equation}
for some $\mathcal{F}_t$-adapted, locally (in time) integrable processes $r=r_t$ and $v=v_t\geq 0$; cf. \cite[Theorem 5.12]{LSh-S1}.
Accordingly, in what follows, we assume that  \eqref{R-simple-s} holds.

The next step is to derive conditions on the processes $r$ and $v$ that lead to continuous time analogs of the LLN  \eqref{rate} and CLT \eqref{eq:LLN-g2}, namely,
\begin{align}
\label{CT-gr}
&\lim_{t\to \infty}\frac{\ln W^f_t}{t} = g_R(f) \ \ {\text{with probability one,}}\\
\label{CLT-W-f}
&\lim_{t\to \infty} \frac{1}{\sqrt{t}}\left(\ln W^f_t-t \,g_R(f)\right) = \mathcal{N}\big(0,\vr_R(f)\big) \ \ {\text{in distribution}},
\end{align}
and to identify the corresponding functions $g_R$ and $\vr_R$.

We will use the following conditions on the processes $r$ and $v$ in \eqref{R-simple-s}.

{\sc Condition (C1).} There exist positive numbers $\mu$ and $\sigma$ such that,
with probability one,
\begin{equation}
\label{ct-erg1}
\lim_{t\to \infty} \frac{1}{t}\int_0^tr_s\, ds =\mu,\ \quad \ \lim_{t\to \infty} \frac{1}{t}\int_0^t v_s\, ds =\sigma^2.
\end{equation}

Next, with $\mu$ and $\sigma$ from \eqref{ct-erg1}, define
\begin{equation}
\label{aux-hat}
\widehat{r}(t)=\frac{1}{\sqrt{t}}\int_0^t (r_s-\mu)ds,\ \ \widehat{\sigma}(t)=\frac{1}{\sqrt{t}}\int_0^t \sqrt{v_s}\, dB_s,\ \
\widehat{v}(t)=\frac{1}{\sqrt{t}}\int_0^t (v_s-\sigma^2)ds.
\end{equation}

{\sc Condition (C2).} As $t\to \infty$, the three-dimensional process
$$
t \mapsto \big(\widehat{r}(t),\widehat{\sigma}(t),\widehat{v}(t)\big)
$$
converges in distribution to a Gaussian random vector with
mean zero and covariance matrix ${Q}$.


\begin{thm}
\label{thm-R-simple-s}
Under {\sc (C1)}, convergence \eqref{CT-gr} holds with
\begin{equation}
\label{gR00}
g_R(f)= f\mu-\frac{f^2\sigma^2}{2}.
\end{equation}
Under  {\sc (C1), (C2)}, convergence
\eqref{CLT-W-f} holds with
$\vr_R(f)=\widehat{\boldsymbol{f}}{Q}\widehat{\boldsymbol{f}}^{\top}$ and row vector $\widehat{\boldsymbol{f}}=(f,f,-f^2/2)$.
\end{thm}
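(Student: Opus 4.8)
The plan is to start from the closed-form expression \eqref{DDE-2-cont}, which under the decomposition \eqref{R-simple-s} reads
$$
\ln W^f_t = f\int_0^t r_s\, ds + f M_t - \frac{f^2}{2}\int_0^t v_s\, ds, \qquad M_t := \int_0^t\sqrt{v_s}\, dB_s,
$$
and to reduce both claims to the asymptotics of the three integrals, with a separate treatment of the local-martingale term $M_t$. Once this reduction is in place, everything follows by algebraic rearrangement and direct appeal to {\sc (C1)} and {\sc (C2)}.

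For the LLN \eqref{CT-gr} I would divide by $t$, obtaining
$$
\frac{\ln W^f_t}{t} = f\cdot\frac{1}{t}\int_0^t r_s\, ds + f\cdot\frac{M_t}{t} - \frac{f^2}{2}\cdot\frac{1}{t}\int_0^t v_s\, ds.
$$
By {\sc (C1)} the first and third terms converge almost surely to $f\mu$ and $f^2\sigma^2/2$, respectively. The only substantive step is to show $M_t/t\to 0$ almost surely. Since $\langle M\rangle_t=\int_0^t v_s\, ds$ and {\sc (C1)} gives $\langle M\rangle_t/t\to\sigma^2>0$, in particular $\langle M\rangle_t\to\infty$, so the strong law of large numbers for continuous local martingales yields $M_t/\langle M\rangle_t\to 0$ a.s.; multiplying by $\langle M\rangle_t/t\to\sigma^2$ then gives $M_t/t\to 0$. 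Collecting terms produces $g_R(f)=f\mu-f^2\sigma^2/2$, which is \eqref{gR00}.

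For the CLT \eqref{CLT-W-f} I would subtract $t\,g_R(f)$ and divide by $\sqrt{t}$. Using \eqref{gR00} to cancel the leading-order drift and recalling the centered quantities \eqref{aux-hat},
$$
\frac{\ln W^f_t-t\,g_R(f)}{\sqrt{t}} = f\,\widehat{r}(t)+f\,\widehat{\sigma}(t)-\frac{f^2}{2}\,\widehat{v}(t) = \widehat{\boldsymbol{f}}\big(\widehat{r}(t),\widehat{\sigma}(t),\widehat{v}(t)\big)^{\top},
$$
where, for fixed $f$, the row vector $\widehat{\boldsymbol{f}}=(f,f,-f^2/2)$ is constant in $t$. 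By {\sc (C2)} the three-dimensional process converges in distribution to a mean-zero Gaussian vector with covariance $Q$; since a fixed linear functional is continuous, the continuous mapping theorem gives convergence of the right-hand side to a mean-zero Gaussian with variance $\widehat{\boldsymbol{f}}Q\widehat{\boldsymbol{f}}^{\top}$, which is exactly $\vr_R(f)$.

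The only genuinely analytic obstacle is the almost-sure vanishing of the normalized martingale term $M_t/t$: this is where the strong law for continuous local martingales, rather than a bare integrability bound, is needed, and it crucially uses $\sigma>0$ in {\sc (C1)} to guarantee $\langle M\rangle_t\to\infty$. The remaining steps are purely algebraic rearrangements together with a direct application of {\sc (C2)}.
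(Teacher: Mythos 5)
Your proof is correct and takes essentially the same route as the paper: both start from the explicit representation \eqref{DDE-3-cont} under \eqref{R-simple-s}, kill the term $t^{-1}\int_0^t\sqrt{v_s}\,dB_s$ via the strong law of large numbers for (local) martingales, and rewrite the centered, normalized log-wealth as the fixed linear functional $\widehat{\boldsymbol{f}}$ applied to $\big(\widehat{r}(t),\widehat{\sigma}(t),\widehat{v}(t)\big)$, to which {\sc (C2)} applies directly. The only difference is that you make explicit what the paper delegates to citations, namely the chain $M_t/t=(M_t/\langle M\rangle_t)\cdot(\langle M\rangle_t/t)$ with $\langle M\rangle_t\to\infty$ guaranteed by $\sigma>0$, and the continuous mapping theorem in the last step.
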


\begin{proof} If $\langle R \rangle_t = \int_0^t v_s\,ds$, then equality  \eqref{DDE-3-cont} becomes
\begin{equation}
\label{LogW-erg}
\frac{\ln W^f_t}{t} = \frac{f}{t}\int_0^tr_s\, ds - \frac{f^2}{2t}\int_0^t v_s\, ds+\frac{f}{t} \int_0^t \sqrt{v_s}\, dB_s.
\end{equation}
By the strong law of large numbers for square integrable martingales \cite[Corollary 1 to Theorem 2.6.10]{LSh-M},
$$
\lim_{t\to \infty} \frac{1}{t} \int_0^t \sqrt{v_s}\, dB_s=0 \ \ {\text{with probability one.}}
$$

Next,  \eqref{LogW-erg} implies
\begin{equation}
\label{R-simple-s-vt}
\frac{1}{\sqrt{t}}\left(\ln W^f_t-t \,g_R(f)\right)=\frac{f}{\sqrt{t}}\int_0^t (r_s-\mu)ds+\frac{f}{\sqrt{t}}\int_0^t \sqrt{v_s}\, dB_s -
\frac{f^2}{2\sqrt{t}}\int_0^t (v_s-\sigma^2)ds,
\end{equation}
that is,
$$
\frac{1}{\sqrt{t}}\left(\ln W^f_t-t \,g_R(f)\right)=f\widehat{r}(t)+f\widehat{\sigma}(t)-\frac{f^2}{2}\widehat{v}(t),
$$
and the joint convergence  leads to \eqref{CLT-W-f}.
\end{proof}

\begin{rem}
If  \eqref{gR00} holds, then
\begin{equation}
\label{CTW-f*}
 f^*\equiv \arg\max_f g_R(f)=\frac{\mu}{\sigma^2},\ \  g_R(f^*)=\frac{\mu^2}{2\sigma^2},
\end{equation}
and the NS-NL condition becomes
$$
0\leq \mu\leq \sigma^2.
$$
\end{rem}

The following special case of \cite[Theorem 7.4.1]{Kurtz-MP} can be used to verify
{\sc Condition (C2)}.

\begin{prop}
	\label{pro-joint-conv}
	Let $a_{k,t},\  k=1,\ldots, N,\ t\geq 0,$ be adapted square-integrable processes with the following property: for every $k,m$, there exists a real number $c_{km}$ such that
	$$
	\lim_{t\to \infty} \frac{1}{t}\int_0^t a_{k,s}a_{m,s}\, ds = c_{km}\ \ \ \ {\text{in probability}}.
	$$
	Let $B_{k,t},\ k=1,\ldots, N,\ t\geq 0,$ be an $N$-dimensional Gaussian process with mean zero and covariance $\bE [B_{k,t}B_{m,s}]=\rho_{km}\min(t,s)$, where $\rho_{km}\in [-1,1]$ and
		$\rho_{kk}=1$. Then, as $t\to \infty$,  the $N$-dimensional process
		$$
		\left(\frac{1}{\sqrt{t}}\int_0^t a_{1,s}dB_{1,s},\ldots,  \frac{1}{\sqrt{t}}\int_0^t a_{N,s}dB_{N,s}\right)
		$$
		converges in distribution to a Gaussian random vector with mean zero and covariance matrix $(c_{km}\rho_{km},\ k,m=1,\ldots, N)$.
\end{prop}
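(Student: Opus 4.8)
The plan is to recognize the coordinate processes $M^k_t=\int_0^t a_{k,s}\,dB_{k,s}$ as continuous local martingales and to derive the joint Gaussian limit from a multidimensional martingale central limit theorem, for which the only nontrivial input is the convergence of the rescaled quadratic covariations to a \emph{deterministic} matrix. First I would record the mutual brackets: since $\bE[B_{k,t}B_{m,s}]=\rho_{km}\min(t,s)$, the driving processes satisfy $d\langle B_k,B_m\rangle_s=\rho_{km}\,ds$, and therefore
\[
\langle M^k,M^m\rangle_t=\rho_{km}\int_0^t a_{k,s}a_{m,s}\,ds .
\]
Dividing by $t$ and invoking the hypothesis gives $\tfrac{1}{t}\langle M^k,M^m\rangle_t\to \rho_{km}c_{km}$ in probability as $t\to\infty$.

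Next I would introduce the diffusive rescaling. For a scale parameter $\lambda>0$ (to be sent to infinity), define the continuous local martingales
\[
M^{\lambda,k}_u=\frac{1}{\sqrt{\lambda}}\int_0^{\lambda u} a_{k,s}\,dB_{k,s},\qquad u\in[0,1].
\]
Their brackets are $\langle M^{\lambda,k},M^{\lambda,m}\rangle_u=\tfrac{\rho_{km}}{\lambda}\int_0^{\lambda u}a_{k,s}a_{m,s}\,ds$, and applying the previous step with $t=\lambda u$ (writing $\tfrac{1}{\lambda}\int_0^{\lambda u}=u\cdot\tfrac{1}{\lambda u}\int_0^{\lambda u}$) shows these converge in probability to the deterministic limit $\rho_{km}c_{km}\,u$. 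Because each $M^{\lambda,k}$ is continuous it has no jumps, so the Lindeberg/negligibility hypothesis of the martingale CLT holds automatically. The multidimensional functional martingale central limit theorem then yields weak convergence of the vector $(M^{\lambda,1},\dots,M^{\lambda,N})$ to a Brownian motion with constant covariance matrix $(c_{km}\rho_{km})$.

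Evaluating this limit at $u=1$ gives convergence in distribution of $\bigl(\tfrac{1}{\sqrt{\lambda}}\int_0^{\lambda}a_{k,s}\,dB_{k,s}\bigr)_{k=1}^N$ to $\mathcal N\bigl(0,(c_{km}\rho_{km})\bigr)$, which is exactly the assertion upon setting $\lambda=t$.

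The point requiring care is that the hypothesis supplies convergence of the brackets only \emph{in probability}, and the integrands $a_{k,s}$ are only locally square-integrable, so the $M^k$ are a priori local rather than true martingales; the martingale CLT is robust to both, needing precisely convergence-in-probability of the quadratic covariations to a deterministic limit and applying to local martingales. I expect the genuine obstacle to be the passage from marginal to \emph{joint} convergence: this is controlled exactly by the cross-brackets $\langle M^k,M^m\rangle$, and the essential feature is that their limits $\rho_{km}c_{km}$ are \emph{constants}, which rules out a mixture-of-Gaussians (stable) limit and delivers a genuine Gaussian vector. I would close by noting that the statement is the specialization of \cite[Theorem 7.4.1]{Kurtz-MP} to this setting.
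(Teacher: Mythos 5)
Your proposal is correct and follows essentially the same route as the paper: both rescale the stochastic integrals into a family of continuous (local) martingales, identify the quadratic covariations $\rho_{km}\int_0^{\cdot}a_{k,s}a_{m,s}\,ds$, use the hypothesis to get deterministic limits $c_{km}\rho_{km}u$ in probability, and invoke the martingale CLT of Ethier--Kurtz (Theorem 7.4.1) to obtain the joint Gaussian limit. The only cosmetic difference is that the paper states the CLT's conclusion as convergence to a continuous martingale with brackets $c_{km}\rho_{km}t$ and then applies the multidimensional L\'{e}vy characterization to identify it as a Gaussian process, a step your argument absorbs into the statement of the martingale CLT itself.
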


\begin{proof}
Let $b_n,\ n\geq 1$ be a sequence of positive real numbers converging to $+\infty$ and consider the sequence of $N$-dimensional martingales
$\boldsymbol{M}^{(n)}_t=\left(M^{(n)}_{k,t},\, k=1,\ldots, N\right),\ \ n\geq 1, \ \ t\geq 0,$ defined by
$$
M^{(n)}_{k,t}=\frac{1}{\sqrt{b_n}}\int_0^{b_nt}a_{k,s}\,dB_{k,s}.
$$
By assumption, for every $k$, $m$, and $t>0$,
$$
\langle M^{(n)}_k,M^{(n)}_m \rangle_t = \frac{1}{b_n}\int_0^{b_nt} a_{k,s}a_{m,s}\rho_{km}\, ds \to c_{km}\rho_{km}t,\ n\to \infty,
 $$
 in probability. Therefore, using \cite[Theorem 7.4.1]{Kurtz-MP}, we conclude that
 $$
 \lim_{n\to \infty} \boldsymbol{M}^{(n)} \wcL \boldsymbol{M},
 $$
 weakly in the space of continuous functions, and $\boldsymbol{M}_t=(M_{1,t},\ldots, M_{N,t}),\ t\geq 0,$ is a continuous square-integrable martingale with
 $$
 \langle M_k,M_m\rangle_t=c_{km}\rho_{km}t.
 $$
Using the multi-dimensional version of the L\'{e}vy characterization of the Brownian motion (which, in turn, can be considered a very special case of uniqueness for
martingale problems, such as \cite[Theorem 5.3.1]{Str-Var}), we conclude that  $\boldsymbol{M}_t$ is the  Gaussian vector with the desired distribution.
\end{proof}

\begin{exm}
\label{exm-GBM}
{\rm {
The easiest example of the process $R$ satisfying conditions of Theorem \ref{thm-R-simple-s} is
\begin{equation}
\label{R-BM}
R_t= \mu t + \sigma B_t,
\end{equation}
having constant $r_t=\mu$ and $v_t=\sigma^2$. In this case
\begin{equation}
\label{vR00}
\vr_R(f)=f^2\sigma^2.
\end{equation}
Combining \eqref{gR00} and \eqref{vR00}, we conclude that the corresponding asymptotic Sharpe ratio
$$
\SR_R(f)=\frac{g_R(f)}{\sqrt{\vr_R(f)}}=\frac{\mu}{\sigma}-\frac{f\sigma}{2},
$$
is a linear decreasing function of $f$, and
$$
\SR_R(f^*)=\frac{\mu}{2\sigma}.
$$
The corresponding optimal ridge strategy, maximizing
$$
\Ri_R(f,\gamma)= \mu f -\frac{f^2\sigma^2}{2}-\gamma f^2\sigma^2
$$
with respect to $f$, is
\begin{equation}
\label{fRi}
    f^{\Ri} = \frac{\mu}{\sigma^2(1+2\gamma)} = \frac{f^*}{1+2\gamma}.
\end{equation}
}}
This concludes Example \ref{exm-GBM}.
\end{exm}

The following is a direct extension of \eqref{fRi}.

\begin{prop}
    Let the return process $t\mapsto R_t$ be such that the asymptotic long-term growth rate and variance are given by
\begin{equation*}
g_R(f)= f\mu-\frac{f^2\sigma^2}{2}, \quad \vr_R (f) = f^2c^2,
\end{equation*}
for $c,\sigma,\mu>0$. Then the optimal ridge strategy is
\begin{equation*}
    f^{\Ri}(\gamma) = \arg\max\Ri_R(f,\gamma)= \frac{\mu}{\sigma^2(1+2\gamma(c/\sigma)^2)}.
\end{equation*}
\end{prop}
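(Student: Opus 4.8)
The plan is to reduce this proposition to the optimization of a concrete one-variable quadratic, exactly as in Example \ref{exm-GBM}. By definition of the ridge coefficient, I would write
\begin{equation*}
\Ri_R(f,\gamma) = g_R(f) - \gamma \vr_R(f) = \left(f\mu - \frac{f^2\sigma^2}{2}\right) - \gamma f^2 c^2
= \mu f - \frac{f^2}{2}\left(\sigma^2 + 2\gamma c^2\right).
\end{equation*}
This is a downward-opening parabola in $f$ (the leading coefficient is $-\tfrac12(\sigma^2+2\gamma c^2) < 0$ since $\sigma, c > 0$ and $\gamma \geq 0$), so it has a unique unconstrained maximizer, found by setting the derivative to zero.

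Next I would differentiate with respect to $f$:
\begin{equation*}
\frac{\partial}{\partial f}\Ri_R(f,\gamma) = \mu - f\left(\sigma^2 + 2\gamma c^2\right) = 0,
\end{equation*}
which gives
\begin{equation*}
f^{\Ri}(\gamma) = \frac{\mu}{\sigma^2 + 2\gamma c^2} = \frac{\mu}{\sigma^2\left(1 + 2\gamma (c/\sigma)^2\right)},
\end{equation*}
the claimed formula, obtained by factoring $\sigma^2$ out of the denominator. When $c = \sigma$ this recovers \eqref{fRi} of Example \ref{exm-GBM}, as expected, since that case corresponds to $\vr_R(f) = f^2\sigma^2$.

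The only genuine subtlety, and the step I expect to require a word of care, is verifying that the unconstrained maximizer actually coincides with the maximizer over the admissible interval $[0,1]$ imposed by the NS-NL condition, rather than a boundary point. Since $\mu > 0$ the critical point $f^{\Ri}(\gamma)$ is strictly positive, and because $\sigma^2 + 2\gamma c^2 \geq \sigma^2$ we have $f^{\Ri}(\gamma) \leq \mu/\sigma^2 = f^*$; thus under the standing NS-NL assumption $f^* = \mu/\sigma^2 \leq 1$ the ridge maximizer lies in $[0,f^*] \subset [0,1]$ and is interior, so the first-order condition indeed identifies the global maximizer on the feasible set. (This mirrors the reasoning behind Proposition \ref{prop-Ri-loc-max}, where the optimal ridge strategy was shown to lie in $[0,f^*)$.) Everything else is the routine calculus of a quadratic, so I would not belabor it.
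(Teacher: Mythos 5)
Your proof is correct and takes essentially the same route as the paper: the paper offers no separate argument, presenting the proposition as a ``direct extension'' of the computation in Example \ref{exm-GBM}, which is exactly the elementary quadratic maximization $\mu - f(\sigma^2+2\gamma c^2)=0$ that you carry out. Your closing check that the critical point lies in $[0,f^*]\subset[0,1]$ under the NS-NL condition is a small extra bit of care that the paper leaves implicit.
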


Next, let us discuss a connection between fractional Kelly strategies and power utility functions. Consider the utility function
\begin{equation}\label{eq:power-utility}
    U_{\eta}(x) = \frac{x^\eta}{\eta}, \quad \eta<1,
\end{equation}
where, by convention,  the case $\eta = 0$ corresponds to logarithmic utility. Merton's Fund Separation theorem with no consumption implies that, when
using  utility function \eqref{eq:power-utility}
with the model from Example \ref{exm-GBM}, the optimal strategy  is
\begin{equation*}
    f^{\eta}=\frac{\mu}{\sigma^2(1-\eta)}.
\end{equation*}
Comparing $f^{\eta}$ and $f^{\Ri}$ from \eqref{fRi}, we see that, as long as $\gamma>0$, the
fractional Kelly strategy with $f=f^{\Ri}$
is equivalent to optimizing power utility function with parameter $\eta = -2\gamma$.
The limits
$$
\lim_{\gamma\to 0+}f^{\Ri}(\gamma)=\lim_{\eta\to 0-}f^{\eta}=\frac{\mu}{\sigma^2}
$$
recover the original Kelly strategy, corresponding to logarithmic utility.

Theorem \ref{thm-R-simple-s} suggests  that there are many models that,
at the level of LLN and CLT, exhibit the same behavior as \eqref{R-BM}.

\begin{prop}
\label{prop-as}
Assume that
$$
\lim_{t\to \infty} r_t=\mu,\ \ \lim_{t\to \infty} v_t=\sigma^2
$$
with probability one, and, in \eqref{aux-hat}, we have
$$
\lim_{t\to \infty} \widehat{r}(t)=0,\ \ \lim_{t\to \infty} \widehat{v}(t)=0
$$
in probability. Then the conclusions of Theorem \ref{thm-R-simple-s} hold and $\vr_R(f)=f^2\sigma^2.$
\end{prop}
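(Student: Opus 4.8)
The plan is to show that the stated hypotheses imply Conditions {\sc (C1)} and {\sc (C2)}, so that Theorem \ref{thm-R-simple-s} applies verbatim, and then to read off the covariance matrix ${Q}$ and compute $\vr_R(f)=\widehat{\boldsymbol{f}}{Q}\widehat{\boldsymbol{f}}^{\top}$ explicitly. No new machinery is needed beyond Proposition \ref{pro-joint-conv} and a convergence-in-probability (Slutsky) argument for the joint limit.

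First I would verify {\sc (C1)}. Since $r_t\to\mu$ with probability one, the continuous Ces\`aro/Toeplitz lemma gives $\frac{1}{t}\int_0^t r_s\,ds\to\mu$ almost surely: for any $\varepsilon>0$ choose $T_0$ with $|r_s-\mu|<\varepsilon$ for $s>T_0$, split the integral at $T_0$, and note the contribution of $[0,T_0]$ is $O(1/t)$. The identical argument applied to $v_t\to\sigma^2$ yields $\frac{1}{t}\int_0^t v_s\,ds\to\sigma^2$ almost surely, which is exactly {\sc (C1)}. In particular $g_R(f)=f\mu-\tfrac{f^2\sigma^2}{2}$ follows at once.

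Next I would identify the joint limit required in {\sc (C2)}. Two of the three coordinates are assumed to vanish: $\widehat{r}(t)\to 0$ and $\widehat{v}(t)\to 0$ in probability. For the remaining coordinate $\widehat{\sigma}(t)=\frac{1}{\sqrt{t}}\int_0^t\sqrt{v_s}\,dB_s$ I would apply Proposition \ref{pro-joint-conv} with $N=1$, $a_{1,s}=\sqrt{v_s}$, and $B_{1,s}=B_s$; its hypothesis $\frac{1}{t}\int_0^t v_s\,ds\to\sigma^2$ in probability is precisely the second part of {\sc (C1)}, so $\widehat{\sigma}(t)\to\mathcal{N}(0,\sigma^2)$ in distribution.

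The one substantive point is upgrading these three marginal statements to joint convergence of the vector $\big(\widehat{r}(t),\widehat{\sigma}(t),\widehat{v}(t)\big)$, and here I would invoke Slutsky's theorem: since the outer two components converge in probability to the constant $0$, the joint law converges to that of $\big(0,\mathcal{N}(0,\sigma^2),0\big)$, a mean-zero (possibly degenerate) Gaussian vector with covariance matrix ${Q}=\mathrm{diag}(0,\sigma^2,0)$. This establishes {\sc (C2)}, so Theorem \ref{thm-R-simple-s} delivers \eqref{CT-gr} and \eqref{CLT-W-f} with $\vr_R(f)=\widehat{\boldsymbol{f}}{Q}\widehat{\boldsymbol{f}}^{\top}$ and $\widehat{\boldsymbol{f}}=(f,f,-f^2/2)$. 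Because only the $(2,2)$ entry of ${Q}$ is nonzero, this contracts to $\vr_R(f)=f\cdot\sigma^2\cdot f=f^2\sigma^2$, as claimed. I do not expect any genuine obstacle; the Slutsky step is the only place requiring care, since it is what lets marginal convergence of the nondegenerate coordinate combine with probability-$1$-to-constant behavior of the other two.
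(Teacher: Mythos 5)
Your proposal is correct and follows essentially the same route as the paper's own proof: the paper verifies {\sc (C1)} via a continuous-time Toeplitz (Ces\`aro) lemma, applies Proposition \ref{pro-joint-conv} to obtain the Gaussian limit of the stochastic-integral term, and tacitly uses the vanishing of $\widehat{r}(t)$ and $\widehat{v}(t)$ together with Slutsky's theorem to conclude. Your write-up merely makes explicit what the paper leaves implicit, namely the Slutsky step and the reading of the limit as Condition {\sc (C2)} with the degenerate covariance matrix $\mathrm{diag}(0,\sigma^2,0)$.
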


\begin{proof}
Convergence \eqref{ct-erg1} follows by a continuous-time version of the Toeplitz lemma. Then Proposition \ref{pro-joint-conv} implies
$$
\lim_{t\to \infty} \frac{f}{\sqrt{t}}\int_0^t \sqrt{v_s}\, dB_s \wc \mathcal{N}(0,f^2\sigma^2),
$$
completing the proof.
\end{proof}

\begin{exm}
\label{exm-logistic}
{\rm {
In \eqref{R-simple-s}, set $v_t\equiv \sigma^2$ and consider the logistic model for $r$:
$$
dr_t=r_t\left(1-\frac{r_t}{\mu}\right)\left(adt+b\,d\bar{B}_t\right),
$$
where $\bar{B}$ is a standard Brownian motion. Correlation between $B$ and $\bar{B}$ can be arbitrary, but the underlying stochastic basis must be
rich enough to support two Brownian motions. If  $r_0\in (0,\mu)$ and $ 2 a > b^2$, then the conclusion of Proposition \ref{prop-as} holds.

Indeed, with no loss of generality, we take $\mu=1$ and
verify the conditions of Proposition \ref{prop-as} using the same arguments as in
\cite[Section 4]{Logistic-Kink}.

 Consider the process
$$
Y_t=\ln\frac{r_t}{1-r_t}.
$$
By the It\^{o} formula,
$$
dY_t=\left( a+\frac{b^2}{2}\tanh(Y_t/2)\right)dt + b\,d\bar{B}_t,
$$
where $\tanh(y)=(e^y-e^{-y})/(e^y+e^{-y})$. If $\varepsilon=a-(b^2/2)>0$, then
$$
Y_t \geq Y_0+\varepsilon t + b\bar{B}_t = Y_0+t\left(\varepsilon+\frac{b}{t}\, \bar{B}_t\right).
$$
Because  $\lim_{t\to \infty} \bar{B}_t/t=0$ with probability one and
$$
0\leq 1-r_t=\frac{1}{1+e^{Y_t}} \leq e^{-Y_t},
$$
we conclude that, with probability one,
$\lim_{t\to \infty} r_t=1$ and $\int_0^{\infty} (1-r_t)\, dt <\infty$, that is, the conditions of Proposition \ref{prop-as} hold.
}}

This concludes Example \ref{exm-logistic}.
\end{exm}

Using Theorem \ref{thm-R-simple-s}, we can incorporate  standard  interest rate and stochastic volatility models into the wealth process  \eqref{wealth1-ct}.
Below  are three examples:
Vasicek model for $r$ with constant $v$, CIR model for $r$ with constant $v$, and CIR model for $v$ with constant $r$.
In all the examples,   $B$ and $\bar{B}$ are standard Brownian motions
on $(\Omega, \mathcal{F}, \{\mathcal{F}_t\}_{t\geq 0},\mathbb{P})$ such that
\begin{equation}
\label{BM-bar}
\bE [B(t)\bar{B}(s)]=\bar{\rho}\,\min(t,s),\
\end{equation}
for some $\bar{\rho}\in [-1,1]$. We will see that $\bar{\rho}$ does not affect $g_R(f)$ and $f^*$ but  does affect $\vr_R$.

\begin{exm}
\label{exm-VMr}
{\rm In \eqref{R-simple-s}, set $v_t\equiv \sigma^2$ and consider the  Vasicek model for $r$:
$$
dr_t=a\left({\mu} -r_t\right)dt+b \, d\bar{B}(t),
$$
where $a,b,\mu$ are positive numbers.  Then \eqref{CTW-f*} and \eqref{CLT-W-f} hold, with
\begin{equation}
\label{vr-VM}
\vr_R(f)=f^2\left(\frac{b^2}{a^2} +\sigma^2+  \frac{2\bar{\rho}\sigma b}{a}\right).
\end{equation}
Indeed, by direct computation,
for every non-random initial condition $r_0$, $\lim_{t\to \infty} r_t\wc r^*$, and the random variable $r^*$  is normal with mean $\mu$ and
variance $b^2/(2a)$.
If $r_0$ has the same distribution as $r^*$ and is independent of $\bar{B}$, then the process $r=r_t,\ t\geq 0$, is stationary and ergodic so that
$$
\lim_{t\to \infty} \frac{1}{t}\int_0^tr_s\, ds = \bE [r^*] = {\mu}\ \ {\text{ with probability one,}}
$$
and
\begin{equation}
\label{VC-aux1}
\lim_{t\to \infty} \frac{1}{\sqrt{t}}\left( \int_0^t (r_s -\mu)ds - \frac{b\bar{B}_t}{a}\right)=0\ \ {\text{in probability.}}
\end{equation}
Then \eqref{CTW-f*}  holds by Theorem \ref{thm-R-simple-s}. After that, \eqref{R-simple-s-vt}, \eqref{BM-bar}, and \eqref{VC-aux1} imply
$$
\lim_{t\to \infty} \frac{1}{\sqrt{t}}\left(\ln W^f_t-t \,g_R(f)\right)
\wc \mathcal{N}\big(0,\vr_R(f)\big),\ \
$$
with $\vr_R$ given by \eqref{vr-VM}. }

 This concludes Example \ref{exm-VMr}.
 \end{exm}

\begin{exm}
\label{exm-CIRr}
{\rm In \eqref{R-simple-s}, set $v_t\equiv \sigma^2$ and consider the  CIR model \cite{CIR-Orig} for $r$:
$$
dr_t=a\left({\mu}-r_t\right)dt+b\sqrt{r_t}\,d\bar{B}(t),
$$
where $\bar{B}$ is a standard Brownian motion and $2\mu a \geq b^2$.
 Then \eqref{CTW-f*} and \eqref{CLT-W-f} hold, with
 \begin{equation}
 \label{vr-CIR}
\vr_R(f)=f^2\left(\frac{b^2\mu}{a^2} +\sigma^2+  \frac{2\bar{\rho} \sigma b\tilde{\mu}}{a}\right),
\end{equation}
where
$$
\tilde{\mu}=\frac{b\Gamma(\nu+(1/2))}{\sqrt{2a}\,\Gamma(\nu)},\ \ \nu=\frac{2a\mu}{b^2}.
$$
Indeed,
for every non-random initial condition $r_0>0$, $\lim_{t\to \infty} r_t\wc r^*$, and the random variable $r^*$  has Gamma distribution with pdf
\begin{equation}
\label{GammaPDF}
p^*(x)=\frac{\alpha^\nu}{\Gamma(\nu)}x^{\nu-1}e^{-\alpha x},\ \ x\geq 0, \
\ \alpha=\frac{2a}{b^2};
\end{equation}
see \cite[Page 392]{CIR-Orig}.
If $r_0$ has the same distribution as $r^*$ and is independent of $\bar{B}$, then the process $r=r_t,\ t\geq 0$, is stationary and ergodic so that, with probability one,
 \begin{equation}
 \label{CIR-erg}
 \lim_{t\to \infty} \frac{1}{t}\int_0^tr_s\, ds = \bE [r^*] =  {\mu},
 \end{equation}
 see \cite[Theorem 2.10]{CIR-Erg}, and
 \begin{equation}
 \label{CIR-erg2}
 \lim_{t\to \infty} \frac{1}{\sqrt{t}}\left( \int_0^t (r_s -\mu)ds - \frac{b}{a}\int_0^t \sqrt{r_s}\,d\bar{B}_s\right)=0\ \ {\text{in probability.}}
 \end{equation}
 Then \eqref{CTW-f*}  holds by Theorem \ref{thm-R-simple-s}. After that, \eqref{R-simple-s-vt}, \eqref{BM-bar}, and \eqref{CIR-erg2} imply
$$
\lim_{t\to \infty} \frac{1}{\sqrt{t}}\left(\ln W^f_t-t \,g_R(f)\right)\wc \mathcal{N}\big(0,\vr_R(f)\big),
$$
with $\vr_R$ given by \eqref{vr-CIR}, because, by Proposition \ref{pro-joint-conv}, we have joint convergence
$$
\lim_{t\to \infty} \frac{1}{\sqrt{t}}\left( \int_0^t \sqrt{r_s}\,d\bar{B}_s, \ \ \sigma B_t\right) \wc \boldsymbol{\eta},
$$
where $\boldsymbol{\eta}$ is a bi-variate Gaussian vector with mean zero and covariance matrix
$$
\left(
\begin{array}{cc}
\mu & \bar{\rho} \sigma \tilde{\mu}\\
\bar{\rho} \sigma \tilde{\mu} & \sigma^2
\end{array}
\right),
$$
and $\tilde{\mu}=\bE\big[\sqrt{r^*}\,\big]=\frac{b\Gamma(\nu+(1/2))}{\sqrt{2a}\,\Gamma(\nu)}$.}

 This concludes Example \ref{exm-CIRr}.
 \end{exm}

\begin{exm}
\label{exm-CIRv}
{\rm In \eqref{R-simple-s}, set $r_t\equiv \mu$ and consider the  CIR model  for $v$:
\begin{equation}
\label{CIR-vol}
dv_t=\kappa\left(\sigma^2-v_t\right)dt+\beta\sqrt{v_t}\,d\bar{B}(t),\ \ 2\kappa\sigma^2\geq \beta^2.
\end{equation}
The corresponding equation \eqref{wealth1-ct}  becomes
$$
dW_t^f=fW_t^f\big(\mu dt + \sqrt{v_t}\,dB(t)\big),
$$
which, for $f=1$ and $v$ from \eqref{CIR-vol}, coincides with the Heston model originally introduced in \cite{Heston-orig}.
In this case, \eqref{CLT-W-f} and \eqref{CTW-f*} hold, with
 \begin{equation}
 \label{vr-CIRv}
 \vr_R(f)=f^2\sigma^2\left(1 +\frac{f^2\beta^2}{4\kappa^2}- 2\bar{\rho} \frac{f \beta}{\kappa}\right).
 \end{equation}

 Indeed, similar to \eqref{CIR-erg},
$$
 \lim_{t\to \infty} \frac{1}{t}\int_0^tv_s\, ds = \sigma^2,
 $$
and then  \eqref{CTW-f*}  holds by Theorem \ref{thm-R-simple-s}.

Next, similar to \eqref{CIR-erg2},
 \begin{equation}
	\label{Hest-erg2}
	\lim_{t\to \infty} \frac{1}{\sqrt{t}}\left( \int_0^t (v_s -\sigma^2)ds - \frac{\beta}{\kappa}\int_0^t \sqrt{v_s}\,d\bar{B}_s\right)=0\ \ {\text{in probability.}}
\end{equation}
 Combining  \eqref{Hest-erg2} with \eqref{R-simple-s-vt} and \eqref{BM-bar} leads to
$$
\lim_{t\to \infty} \frac{1}{\sqrt{t}}\left(\ln W^f_t-t \,g_R(f)\right)\wc
\mathcal{N}\big(0,\vr_R(f)\big),\ \
$$
with $\vr_R$ given by \eqref{vr-CIRv}, because, by Proposition \ref{pro-joint-conv}, we have joint convergence
$$
\lim_{t\to \infty} \frac{1}{\sqrt{t}}\left( \int_0^t \sqrt{v_s}\,d{B}_s, \ \ \int_0^t \sqrt{v_s}\,d\bar{B}_s \right) \wc \boldsymbol{\eta},
$$
where $\boldsymbol{\eta}$ is a bi-variate Gaussian vector with mean zero and covariance matrix
$$
\sigma^2\left(
\begin{array}{cc}
	1 & \bar{\rho} \\
	\bar{\rho}  & 1
\end{array}
\right).
$$
In particular, taking $f=2\bar{\rho}\kappa/\beta$ gives
$\vr_R=(2\bar{\rho}\kappa/\beta)^2\sigma^2(1-\bar{\rho}^2)$. As a result, if
$$
0<\frac{\beta}{\sigma^2}<\frac{2\kappa}{\beta} <\frac{\mu}{\sigma^2} < 1,
$$
then, in the limit $\bar{\rho}\to 1^{-}$, it is possible to achieve $\vr_R(f)\to 0$ and
$\SR_R(f)\to +\infty$ while satisfying the (strict) NS-NL condition  $f\in (0,1)$.}

This concludes Example \ref{exm-CIRv}. 
\end{exm}

More sophisticated constructions  are possible, for example, by combining the Vasicek or CIR model for $r$ with the CIR model for $v$.

The next example demonstrates that, for some continuous time models satisfying \eqref{R-simple-s}
and \eqref{CT-gr}, the asymptotic variance $\vr_R$ is {\em effectively} zero because
the convergence in \eqref{CLT-W-f} is much faster.

\begin{exm}
\label{exm-logistic-price}
{\rm  Consider a market where the stock price moves according to the stochastic differential equation
\begin{equation}
\label{logistic-price-sde}
dS_t=\mu S_t(M-S_t)dt + \sigma S_t dB_t,
\end{equation}
with $\mu>0$, $M>1$, and initial condition $S_0=1$. The corresponding return process $R_t$ and wealth process $W^f_t$ become
\begin{equation}
\label{LgstR}
    dR_t =  \frac{dS_t}{S_t}=\mu(M-S_t)dt + \sigma dB_t
\end{equation}
and
\begin{equation}
    \label{LgstW}
    \ln W_t^f = fR_t - \frac{f^2\sigma^2}{2}\, t.
\end{equation}
By \eqref{LgstR} and It\^{o}'s formula,
\begin{equation}
\label{LgstR1}
    R_t = \ln S_t + \frac{\sigma^2t}{2}.
\end{equation}

Assume that
\begin{equation}
\label{qqq}
2M\mu > \sigma^2.
\end{equation}
Then, by \cite[Proposition 3.3]{GieValWan-15},  the process $t\mapsto S_t$ is ergodic
and  the unique invariant distribution $S^*$
is of the form \eqref{GammaPDF} with
$$
\nu=\frac{2M\mu}{\sigma^2}-1, \ \ \alpha=\frac{2\mu}{\sigma^2}.
$$
As a result, \eqref{LgstW} and \eqref{LgstR1} imply
\begin{equation}
\label{Lgst-Conv}
    \lim_{t \to \infty}\frac{\ln W_t}{t} =\lim_{t \to \infty} \left(\frac{fR_t}{t} - \frac{f^2\sigma^2}{2} \right)=\frac{1}{2}\sigma^2f(1-f) \ {\text{ with probability one,}}
\end{equation}
so that
$$
g_R(f) = \frac{1}{2}\sigma^2f(1-f)
$$
and  $f^*=1/2$.  We see that, unlike the previous examples, $f^*$ is independent
of $\mu$ and $\sigma$.

Moreover,  \eqref{LgstW} and \eqref{LgstR1} show that the  convergence in \eqref{Lgst-Conv} is faster than  \eqref{CLT-W-f} and the corresponding limit is non-Gaussian:
\begin{equation*}
    \lim_{t \to \infty }\left({\ln W_t^f}-tg_R(f)\right) \wc f\cdot\ln S^*.
\end{equation*}
In other words,  compared to models that satisfy {\sc Conditions (C1), (C2)} and  have the
log-wealth fluctuations of order $\sqrt{t}$,
model \eqref{LgstR} leads to constant log-wealth fluctuations and thus can be considered
{\tt qualitatively} less risky.
}

This concludes Example \ref{exm-logistic-price}.
\end{exm}

There are  analogs of \eqref{TtoG-A} and \eqref{TtoG-CLT} for the process \eqref{wealth1-ct} when \eqref{R-BM} holds, that is, $R_t=\mu t + \sigma B_t,\ t\geq 0$, and
$g_R(f)=f\mu-(f^2\sigma^2/2), \ \vr_R(f)=f^2\sigma^2$.
 If  $\tw(f)=\inf\left\{ t>0: W^f_t>\ub>1\right\}$, then, according to \cite[Equation (19)]{Schrod}, the random variable $\tw(f)$ has the
 {\em inverse Gaussian distribution} with pdf
\begin{equation}
\label{fp-pdf}
p_f(t)=\left(\frac{\ln \ub}{\sqrt{\vr_R(f)}}\right)\, (2\pi t^3)^{-1/2} \exp\left(-\frac{\big(\ln \ub -t g_R(f)\big)^2}{2t\vr_R(f)}\right),\ t>0,
\end{equation}
leading to equalities
$$
\bE [\tw(f)] = \frac{\ln \ub}{g_R(f)},\ \ \
\mathrm{Var}[\tw(f)]=\frac{\vr_R(f)}{g_R^3(f)}\, \ln \ub.
$$
 Furthermore, by  \cite[Equation (1.4)]{FirstPassage},
\begin{align}
\label{TtoG-A-ct}
&\lim_{\ub\to \infty} \frac{\tw(f)}{\ln \ub }= \frac{1}{g_R(f)} \ \text{with probability one};\\
\label{TtoG-CLT-ct}
&\lim_{\ub\to \infty} \sqrt{\frac{g_R(f)}{\ln\ub}}\left(\tw(f)-\frac{\ln\ub}{g_R(f)}\right)
=\mathcal{N}\left(0, \frac{\vr_R(f)}{g_R^2(f)}\right)\
 \text{in distribution}.
\end{align}
Equality  \eqref{TtoG-A-ct} extends to some L\'{e}vy processes \cite{Gut-ct}.

To conclude our analysis of the continuous-time models, let us discuss  connections with high-frequency compounding. We start with the following observation.
\begin{prop}
\label{prop:HF-CT-1}
Given the process $R$ from \eqref{R-simple-s}, define
\begin{equation}
\label{eq:HF-CT-1-1}
P_{n,k}=e^{R_{k/n}-R_{(k-1)/n}}-1,\ W_t^{n,f}=\prod_{k=1}^{\lfloor nt \rfloor}(1+fP_{n,k}).
\end{equation}
Then
\begin{equation}
\label{eq:HF-CT-cov}
\lim_{n\to \infty} W_t^{n,f} \wcL W_t^f
\end{equation}
in $\mathcal{C}(\bR_+)$, where
\begin{equation}
\label{eq:HF-CT-lim}
W_t^f = \exp\left( fR_t+\frac{f(1-f)}{2}\int_0^t v_s\, ds\right),
\end{equation}
and the convergence is uniform in $f$ on compact subsets of $(0,1)$.
\end{prop}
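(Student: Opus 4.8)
The plan is to work at the level of logarithms and reduce everything to a second-order Taylor expansion of the \emph{same} function $F(x)=\ln(1-f+fe^x)$ that already appeared in the proof of Theorem \ref{prop0}. Writing $\Delta_k R = R_{k/n}-R_{(k-1)/n}$, we have $P_{n,k}=e^{\Delta_k R}-1$ and hence $1+fP_{n,k}=1-f+fe^{\Delta_k R}=e^{F(\Delta_k R)}$, so that
\[
\ln W_t^{n,f}=\sum_{k=1}^{\lfloor nt\rfloor}F(\Delta_k R).
\]
Recalling $F(0)=0$, $F'(0)=f$, $F''(0)=f(1-f)$, and the uniform bound $|F'''|\le C_A$ from \eqref{F-bd3}, Taylor's theorem gives
\[
\ln W_t^{n,f}=f\sum_{k=1}^{\lfloor nt\rfloor}\Delta_k R+\frac{f(1-f)}{2}\sum_{k=1}^{\lfloor nt\rfloor}(\Delta_k R)^2+\rho_n(t),
\]
with $|\rho_n(t)|\le \tfrac{C_A}{6}\sum_{k=1}^{\lfloor nt\rfloor}|\Delta_k R|^3$.

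Next I would identify the three limits. First, the linear term telescopes: since $R_0=0$, $\sum_{k=1}^{\lfloor nt\rfloor}\Delta_k R=R_{\lfloor nt\rfloor/n}$, and because the trajectories of $R$ are continuous, $\sup_{0\le t\le T}|R_{\lfloor nt\rfloor/n}-R_t|\to 0$ by uniform continuity of $R$ on $[0,T]$. Second, the quadratic term is the realized quadratic variation of the continuous semimartingale $R$ along the mesh-$1/n$ partition, which converges to $\langle R\rangle_t=\int_0^t v_s\,ds$, uniformly on compacts in probability. Third, the remainder is controlled by
\[
\sum_{k=1}^{\lfloor nt\rfloor}|\Delta_k R|^3\le\Big(\max_{1\le k\le \lfloor nT\rfloor}|\Delta_k R|\Big)\sum_{k=1}^{\lfloor nT\rfloor}(\Delta_k R)^2,
\]
where the maximal increment tends to $0$ path by path (again by uniform continuity of $R$ on $[0,T]$) while the sum of squares stays bounded, so $\sup_{0\le t\le T}|\rho_n(t)|\to 0$ in probability.

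Combining these gives $\sup_{0\le t\le T}|\ln W_t^{n,f}-\ln W_t^f|\to 0$ in probability, where $\ln W_t^f=fR_t+\tfrac{f(1-f)}{2}\int_0^t v_s\,ds$, matching \eqref{eq:HF-CT-lim}. Since on $[0,T]$ both log-processes are, with probability close to one, confined to a common bounded set on which $\exp$ is Lipschitz, the continuous-mapping argument upgrades this to $\sup_{0\le t\le T}|W_t^{n,f}-W_t^f|\to 0$ in probability, i.e. the asserted convergence in $\mathcal{C}(\bR_+)$ (convergence in probability being stronger than the stated weak convergence $\wcL$). Uniformity in $f$ over a compact $A\subset(0,1)$ is automatic: the quadratic-variation limit does not involve $f$, while the coefficients $f$, $f(1-f)$ and the bound $C_A$ are all uniform over $A$.

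The main obstacle is the second step, namely justifying that the realized quadratic variation converges to $\int_0^t v_s\,ds$ uniformly on compacts in probability, and handling this cleanly for the \emph{semimartingale} $R$ rather than a pure martingale. Concretely, splitting $R=\int_0^\cdot r_s\,ds+\int_0^\cdot\sqrt{v_s}\,dB_s=A+M$, one checks that the drift part contributes negligibly, since $\sum_k(\Delta_k A)^2\le\big(\max_k|\Delta_k A|\big)\,\mathrm{TV}(A)\to 0$ and the cross terms vanish by Cauchy--Schwarz, leaving the martingale part whose realized quadratic variation converges to $\langle M\rangle_t=\int_0^t v_s\,ds$. Everything else is routine given the uniform third-derivative bound already established for $F$.
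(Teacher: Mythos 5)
Your argument is correct, but it is not the paper's argument. Both proofs start from the identity $\ln W^{n,f}_t=\sum_{k\leq \lfloor nt\rfloor}F\big(R_{k/n}-R_{(k-1)/n}\big)$ with the same function $F(x)=\ln\big(1+f(e^x-1)\big)$; the difference is how the second-order structure of $F$ is exploited. The paper applies the It\^{o} formula to $s\mapsto F\big(R_s-R_{(k-1)/n}\big)$ on each subinterval $\big(\tfrac{k-1}{n},\tfrac{k}{n}\big]$, which produces the \emph{exact} representation $\ln W_t^{n,f}= \int_0^t H^{n,f}_{t,s}\, dR_s +\int_0^t K^{n,f}_{t,s}\, ds$, with $H^{n,f}$ built from $F'$ and $K^{n,f}$ from $F''v$; the limit \eqref{eq:HF-CT-lim} then follows by dominated convergence for the stochastic and Lebesgue integrals, using the bounds $0<H^{n,f}_{t,s}\leq 1$, $0\leq K^{n,f}_{t,s}\leq v_s$ and the pointwise limits \eqref{HF-CT-p1}--\eqref{HF-CT-p2}. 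There is no remainder term to control and no appeal to realized quadratic variation. You instead Taylor-expand $F$ at the increments, so your proof needs two inputs the paper's avoids: the uniform bound \eqref{F-bd3} on $F'''$ (to dominate the cubic remainder by the maximal increment times the sum of squares) and the classical u.c.p.\ convergence of the realized quadratic variation of a continuous semimartingale along mesh-$1/n$ partitions (your drift/martingale splitting is a correct reduction of this to the martingale case; the statement is in \cite{Protter}). In exchange, your route is more elementary --- It\^{o}'s formula is replaced by Taylor's theorem plus a standard limit theorem --- it runs exactly parallel to the discrete-time proof of Theorem \ref{prop0}, and it makes explicit that the convergence holds in probability, locally uniformly in $(t,f)$, which is indeed stronger than the asserted weak convergence. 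One thing the paper's formulation buys: its proof transfers verbatim to Theorem \ref{thm:HF-CT-1}, where the increments come from a step-function approximation of $R$ rather than from $R$ itself; with your method, the telescoping and realized-quadratic-variation steps would have to be reworked for that modified process. Finally, both arguments rely on the convention, stated in Section \ref{sec:HFC}, of ignoring the discontinuity of $t\mapsto W^{n,f}_t$ via the asymptotically negligible interpolation correction.
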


\begin{proof}
Let
$$
F(x)=\ln\big(1+f(e^x-1)\big),
$$
so that
$$
F'(x)=\frac{fe^x}{1+f(e^x-1)},\ \ F''(x)=\frac{f(1-f)e^x}{(1+f(e^x-1))^2}.
$$
For $(k-1)/n<t\leq k/n$, we apply  the It\^{o} formula to the process
$$
t\mapsto F\big(R_t-R_{(k-1)/n}\big)
$$
to get
$$
\ln W_t^{n,f}= \int_0^t H^{n,f}_{t,s}\, dR_s +\int_0^t K^{n,f}_{t,s}\, ds,
$$
with
$$
H^{n,f}_{t,s}=\sum_{k=1}^{\lfloor nt \rfloor}F'(R_s-R_{(k-1)/n})\mathbf{1}_{(\frac{k-1}{n},\frac{k}{n}]}(s),\ \
K^{n,f}_{t,s}=\frac{1}{2}\sum_{k=1}^{\lfloor nt \rfloor}F''(R_s-R_{(k-1)/n})v(s)\mathbf{1}_{(\frac{k-1}{n},\frac{k}{n}]}(s).
$$
To conclude the proof, note that, for $0<s\leq t$,
\begin{align}
\label{HF-CT-p1}
&0<H^{n,f}_{t,s}\leq 1,\ \lim_{n\to \infty}H^{n,f}_{t,s}=f \ {\text{in probability}};\\
\label{HF-CT-p2}
&0\leq K^{n,f}_{t,s}\leq v_s,\ \lim_{n\to \infty}K^{n,f}_{t,s}=\frac{f(1-f)}{2}\,v_s \ {\text{in probability}}.
\end{align}
\end{proof}

Proposition \ref{prop:HF-CT-1} shows that, up to an It\^{o} correction, every continuous  wealth process is a geometric high-frequency limit of itself.
While not especially surprising, the result suggests other discrete-time approximations of the process $R$ that would lead to the same geometric high-frequency limit.
Here is one example.

\begin{thm}
\label{thm:HF-CT-1}
Given the process $R$ from \eqref{R-simple-s}, define
\begin{equation}
\label{eq:HF-CT-1-2}
P_{n,k}=\exp\left(\frac{r_{(k-1)/n}}{n}+\sqrt{v_{(k-1)/n}}\,(B_{k/n}-B_{(k-1)/n})\right)-1,\ W_t^{n,f}=\prod_{k=1}^{\lfloor nt \rfloor}(1+fP_{n,k}).
\end{equation}
If the processes $t\mapsto r_t$ and $t\mapsto v_t$ are stochastically continuous and
$$
\bE[|r_t|]<\infty, \ \ \bE [v_t]<\infty,\ \ t\geq 0,
$$
then the conclusion of Proposition \ref{prop:HF-CT-1} holds.
\end{thm}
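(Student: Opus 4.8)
The plan is to recognize the frozen--coefficient increments $P_{n,k}$ as the \emph{exact} increments of an auxiliary continuous semimartingale, and then to reuse the It\^o--formula decomposition from the proof of Proposition \ref{prop:HF-CT-1}, this time with an $n$-dependent integrator that converges to $R$. First I would introduce the piecewise-constant (``frozen at the left endpoint'') coefficients: writing $\phi_n(s)=(k-1)/n$ for $s\in((k-1)/n,k/n]$, set $\widetilde r^{\,n}_s=r_{\phi_n(s)}$, $\widetilde v^{\,n}_s=v_{\phi_n(s)}$, and let $\widetilde R^{\,n}$ be the continuous semimartingale with $d\widetilde R^{\,n}_s=\widetilde r^{\,n}_s\,ds+\sqrt{\widetilde v^{\,n}_s}\,dB_s$. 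Since $\widetilde R^{\,n}_{k/n}-\widetilde R^{\,n}_{(k-1)/n}=r_{(k-1)/n}/n+\sqrt{v_{(k-1)/n}}\,(B_{k/n}-B_{(k-1)/n})$, we have $P_{n,k}=\exp\big(\widetilde R^{\,n}_{k/n}-\widetilde R^{\,n}_{(k-1)/n}\big)-1$, so $W^{n,f}_t$ is exactly of the form treated in Proposition \ref{prop:HF-CT-1}, but with $\widetilde R^{\,n}$ in place of $R$.

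Next, with the same $F(x)=\ln(1+f(e^x-1))$ and $F'(0)=f$, $F''(0)=f(1-f)$, I would apply the It\^o formula on each subinterval exactly as in the proof of Proposition \ref{prop:HF-CT-1}, now to $t\mapsto F(\widetilde R^{\,n}_t-\widetilde R^{\,n}_{(k-1)/n})$, obtaining
$$\ln W^{n,f}_t=\int_0^t H^{n,f}_{t,s}\,d\widetilde R^{\,n}_s+\int_0^t K^{n,f}_{t,s}\,ds,$$
where $H^{n,f}_{t,s}=\sum_k F'(\widetilde R^{\,n}_s-\widetilde R^{\,n}_{(k-1)/n})\mathbf 1_{((k-1)/n,k/n]}(s)$ and $K^{n,f}_{t,s}=\tfrac12\sum_k F''(\widetilde R^{\,n}_s-\widetilde R^{\,n}_{(k-1)/n})\,\widetilde v^{\,n}_s\,\mathbf 1_{((k-1)/n,k/n]}(s)$, with the uniform bounds $0<H^{n,f}_{t,s}\le1$ and $0\le K^{n,f}_{t,s}\le \widetilde v^{\,n}_s$ that also provide the uniformity in $f$ over compact subsets of $(0,1)$ through the $f$-uniform control of $F',F'',F'''$ as in \eqref{F-bd3}. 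Splitting $d\widetilde R^{\,n}_s=\widetilde r^{\,n}_s\,ds+\sqrt{\widetilde v^{\,n}_s}\,dB_s$, I would pass to the limit term by term: because the increments $\widetilde R^{\,n}_s-\widetilde R^{\,n}_{(k-1)/n}\to0$ and stochastic continuity gives $\widetilde r^{\,n}_s\to r_s$, $\widetilde v^{\,n}_s\to v_s$ in probability, one has $H^{n,f}_{t,s}\to f$ and $F''(\cdots)\to f(1-f)$, whence
$$\int_0^t H^{n,f}_{t,s}\,d\widetilde R^{\,n}_s\to f\!\int_0^t r_s\,ds+f\!\int_0^t\sqrt{v_s}\,dB_s=fR_t,\qquad \int_0^t K^{n,f}_{t,s}\,ds\to\frac{f(1-f)}{2}\int_0^t v_s\,ds,$$
so that $\ln W^{n,f}_t\to fR_t+\tfrac{f(1-f)}{2}\int_0^t v_s\,ds=\ln W^f_t$, which is the claim.

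The main obstacle is making this limit passage rigorous under the weak integrability hypotheses $\bE|r_t|<\infty$, $\bE[v_t]<\infty$: since the freezing points $(k-1)/n$ move with $n$, I cannot dominate $\widetilde r^{\,n}_s$ and $\widetilde v^{\,n}_s$ by a single integrable process, so neither ordinary dominated convergence for the $ds$-integrals nor the dominated--convergence theorem for the stochastic integral applies directly. I would resolve this by localization: stop at $\tau_N=\inf\{t:\,|r_t|\vee v_t>N\}$, where the coefficients are bounded, so that $\widetilde r^{\,n}\to r$ and $\widetilde v^{\,n}\to v$ in $L^1(ds)$ and $\int_0^t\big(H^{n,f}_{t,s}\sqrt{\widetilde v^{\,n}_s}-f\sqrt{v_s}\big)^2\,ds\to0$ in probability, the latter yielding convergence of the stochastic integral through the $L^2$-isometry by a Scheff\'e-type argument based on $\int_0^t\widetilde v^{\,n}_s\,ds\to\int_0^t v_s\,ds$; the localization is then removed using $\bP(\tau_N\le T)\to0$. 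Alternatively, one may verify the uniform-controlled-variation condition for the integrators $\widetilde R^{\,n}$ and invoke the Kurtz--Protter convergence theorem for stochastic integrals \cite{Kurtz-MP}. Continuity of the limit and convergence in $\mathcal C(\bR_+)$, together with the asymptotically negligible linear-interpolation correction, then follow as in the preceding results.
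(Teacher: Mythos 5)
Your proposal is correct and takes essentially the same route as the paper: the paper's proof likewise introduces the left-endpoint step-function approximations $r_{n,t},\ v_{n,t}$, defines the auxiliary semimartingale $R_{n,t}=\int_0^t r_{n,s}\,ds+\int_0^t\sqrt{v_{n,s}}\,dB_s$ so that $P_{n,k}=e^{R_{n,k/n}-R_{n,(k-1)/n}}-1$, and then reruns the It\^{o}-formula decomposition of Proposition \ref{prop:HF-CT-1}, with stochastic continuity guaranteeing the analogues of \eqref{HF-CT-p1} and \eqref{HF-CT-p2}. Your localization argument for the limit passage fills in a step the paper compresses into ``the rest of the proof is identical,'' so it is added rigor within the same approach rather than a different one.
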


\begin{proof}
Define the process
$$
R_{n,t} = \int_0^tr_{n,s}ds+\int_0^t\sqrt{v_{n,s}}\, dB_s,
$$
where
$$
r_{n,t}=\sum_{k\geq 1}r_{(k-1)/n}\mathbf{1}_{(\frac{k-1}{n},\frac{k}{n}]}(t) ,\
v_{n,t}=\sum_{k\geq1} v_{(k-1)/n}\mathbf{1}_{(\frac{k-1}{n},\frac{k}{n}]}(t)
$$
are the step-function approximations of $r$ and $v$.
Then the first equality in \eqref{eq:HF-CT-1-2} becomes
$$
P_{n,k}=e^{R_{n,k/n}-R_{n,(k-1)/n}}-1,
$$
making the rest of the proof identical to that of Proposition \ref{prop:HF-CT-1}. In particular, stochastic continuity of $r$ and $v$ ensures that the corresponding versions of
\eqref{HF-CT-p1} and \eqref{HF-CT-p2} hold.
\end{proof}

\section{Conclusions and Further Directions}
The aggressiveness of the full Kelly strategy is one of the most frequently discussed caveats in both theory and practice. The second-order analysis presented in this paper paves the way for further research by connecting fluctuations of the wealth process around the asymptotic long-term growth to a measure of risk. One such measure, not discussed in this work, is the Sortino ratio \cite{Sortino}, which accounts for downside fluctuations.

A fractional Kelly strategy corresponds to quantifying risk aversion.
Selecting any type of risk-aversion parameter is challenging. In the case of power utility functions,
the parameter is often estimated by analyzing individual choices in real-world scenarios, such as investment decisions or insurance purchases. Our constructions offer an alternative approach, via optimization of the asymptotic Sharpe ratio or the ridge coefficient.


A broader generalization can be achieved in various sections of this paper, including markets  with a nonzero risk-free rate,  multiple assets, and heavy-tailed returns.
In  continuous time,  heavy-tailed returns can yield more realistic and insightful results,
as well as connections with discrete-time models, such as
Example \ref{exm-CM}. One way to consider heavy-tailed returns is to add jump component to the rate process $R$; see \cite{L-P}. The corresponding second-order analysis in this setting remains an open problem.


\providecommand{\MR}{\relax\ifhmode\unskip\space\fi MR }
\providecommand{\MRhref}[2]{%
  \href{http://www.ams.org/mathscinet-getitem?mr=#1}{#2}
}
\providecommand{\href}[2]{#2}

\end{document}